\newlength\fheight 
\newlength\fwidth 
\pgfplotsset{compat=newest} 
\pgfplotsset{plot coordinates/math parser=false}
\tikzset{sin v source/.style={
		circle,
		draw,
		append after command={
			\pgfextra{
				\draw
				($(\tikzlastnode.center)!0.318!(\tikzlastnode.west)$)
				arc[start angle=180,end angle=0,radius=0.425ex] 
				(\tikzlastnode.center)
				arc[start angle=180,end angle=360,radius=0.425ex]
				($(\tikzlastnode.center)!0.5!(\tikzlastnode.east)$) 
				;
			}
		},
		scale=1.5,
	}
}
\DeclareMathOperator{\vech}{vech}
\DeclareMathOperator{\vect}{vec}
\DeclareMathOperator{\ve}{ve}
\DeclareMathOperator{\cov}{Cov}
\newcommand*{\herm }{\mathsf{H}}
\newcommand{\norm}[1]{\left\lVert#1\right\rVert}
\newtheorem{assum}{Assumption}
\newtheorem{defn}{Definition}
\newtheorem{rmk}{Remark}
\newtheorem{lem}{Lemma}
\title{Identification of AC Networks via Online Learning}
\date{\nth{19} September 2021}
\author{ 
    \href{https://orcid.org/0000-0003-3357-4679}{\includegraphics[scale=0.06]{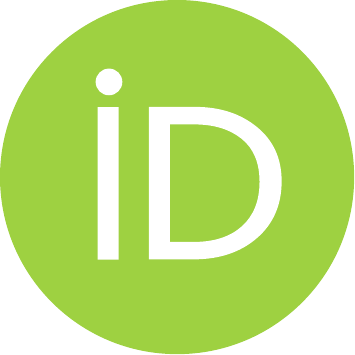}\hspace{1mm}Emanuele Fabbiani} \\
	Identification and Control of Dynamic Systems Laboratory,\\
	University of Pavia,\\
	Pavia, Italy \\
	\texttt{emanuele.fabbiani01@universitadipavia.it} \\
	\And
    \href{https://orcid.org/0000-0002-3505-7731}{\includegraphics[scale=0.06]{orcid.pdf}\hspace{1mm}Pulkit Nahata} \\
	Institute of Mechanical Engineering,\\
	\'Ecole Polytechnique F\'ed\'erale de Lausanne,\\
	Lausanne, Switzerland \\
	\texttt{pulkit.nahata@epfl.ch} \\
	\And
    \href{https://orcid.org/0000-0002-3712-9911}{\includegraphics[scale=0.06]{orcid.pdf}\hspace{1mm}Giuseppe De Nicolao} \\
	Identification and Control of Dynamic Systems Laboratory,\\
	University of Pavia,\\
	Pavia, Italy \\
	\texttt{giuseppe.denicolao@unipv.it} \\
	\And
    \href{https://orcid.org/0000-0002-9492-9624}{\includegraphics[scale=0.06]{orcid.pdf}\hspace{1mm}Giancarlo Ferrari-Trecate} \\
	Institute of Mechanical Engineering,\\
	\'Ecole Polytechnique F\'ed\'erale de Lausanne,\\
	Lausanne, Switzerland \\
	\texttt{giancarlo.ferraritrecate@epfl.ch} \\
}
\begin{document}
\maketitle

\begin{abstract}
	The increasing penetration of intermittent distributed energy resources in power networks calls for novel planning and control methodologies which hinge on detailed knowledge of the grid. However, reliable information concerning the system topology and parameters may be missing or outdated for temporally varying electric distribution networks. This paper proposes an online learning procedure to estimate the network admittance matrix capturing topological information and line parameters. We start off by providing a recursive identification algorithm exploiting phasor measurements of voltages and currents. With the goal of accelerating convergence, we subsequently complement our base algorithm with a design-of-experiment procedure which maximizes the information content of data at each step by computing optimal voltage excitations. Our approach improves on existing techniques, and its effectiveness is substantiated by numerical studies on realistic testbeds.
\end{abstract}

\keywords{Power distribution \and Power grids \and Recursive estimation \and Smart grids \and System identification}

\section{Introduction}
\label{sec:introduction}
Distribution networks serving as an interface between distribution substation and end-to-end customers are going through substantial transformations, attributable to an ever increasing deployment of demand-side technologies and distributed energy resources (DERs). While offering many advantages, DERs can compromise grid reliability due to added intermittency and creation of reverse power flows. In order to ensure safe and resilient operation of distribution systems, comprehensive monitoring and efficient control algorithms are necessary. Nevertheless, any meaningful grid optimization and monitoring task entails grid identification, that is gaining knowledge of grid topology and line parameters.

Research tackling the grid identification problem can broadly be classified into two main branches. On the one hand, works like \cite{bolognani2013identification, deka2018joint, deka2020graphical} propose learning algorithms which draw on the statistical properties of nodal measurements to determine the operational structure and line impedances. This approach has the major advantage of accounting for buses with no available measurements (hidden nodes) \cite{deka2018joint} although restrictive assumptions are required, e.g. hidden nodes must not be adjacent to each other. Moreover, methods based on second-order statistics either make assumptions on the covariance of nodal injections  \cite{bolognani2013identification} or assume its foreknowledge \cite{deka2018joint, deka2020graphical}, and apply only to radial feeders. The latter restriction is dropped in \cite{deka2020graphical}, but only for the purpose of topology estimation. In a realistic setting, these assumptions might not be satisfied; more so due to the rise of distributed generation and smart grids leading to meshed network structures.

On the other hand, in  \cite{yuan2016inverse, babakmehr2016compressive, liao2018urban, ardakanian2019identification}, network identification has been cast into the problem of learning the admittance matrix, where the position of non-zero elements provides topological information, while the values of these are related to the electrical parameters of the lines. Contrary to \cite{bolognani2013identification, deka2018joint, deka2020graphical}, this approach requires voltage, current, or power measurements at each bus of the grid. Nevertheless, it  can be applied to both radial and meshed structures. In particular, Lasso and its variants have been widely adopted to enforce sparsity of the admittance matrix. In \cite{babakmehr2016compressive}, a compressive sensing approach leads to a Lasso formulation to recover the connections of each bus. In \cite{liao2018urban}, a probabilistic graphical model motivates the adoption of Lasso to identify the non-zero elements of the admittance matrix. However, no constraint on the symmetric structure of the admittance matrix is incorporated \textit{a priori}, leading to an over-parameterized solution twice estimating each edge. As a partial remedy to this problem, estimates of the same edge are combined \textit{a posteriori}. While both \cite{babakmehr2016compressive} and \cite{liao2018urban} focus on topology, neither considers the estimation of the electrical parameters of the lines. Finally, in \cite{ardakanian2019identification}, topology and line parameters are obtained at once owing to learning the admittance matrix using Adaptive Lasso. In addition, a procedure to cope with collinearity in measurements is also proposed. 

Different from previously-stated works banking on passively recorded data, an active data collection paradigm is explored in \cite{angjelichinoski2017topology, cavraro2018graph, cavraro2019inverter, du2019optimal}. Grid topology and parameter estimation are complemented with inverter probing in \cite{cavraro2018graph, cavraro2019inverter}. Both works, besides assuming a resistive radial network and employing approximate linearized power-flow equations, lack a comprehensive framework for the optimal design of probing injections. A systematic procedure for maximizing the information content of data samples is explored in \cite{du2019optimal}, wherein active power setpoints for generator nodes are provided by an online design-of-experiment (DoE) procedure \cite{atkinson2007optimum}. Nonetheless, the proposed identification algorithm assumes the availability of line power flows, and neglects the structural constraints of the admittance matrix.
 
All the foregoing works adopt an offline approach, in the sense that they pivot on a batch of previously collected data to estimate grid topology and/or parameters. Distribution networks, unlike transmission networks, oftentimes undergo topological changes for maintenance, load balancing, and fault isolation. Furthermore, future distribution systems are envisaged as reconfigurable networks, wherein certain sections -- just like microgrids -- connect or disconnect to improve dispatch of DERs \cite{shelar2018resilience, Nahata, Dragicevic1}. In the event of a topology change (often localised), a batch method shall discard valuable data, await new samples, and re-run the estimation afresh. On the contrary, an online, recursive identification methodology, encoding the relevant information carried by past data samples in its parameters, can provide new network topology and parameter estimates quickly and autonomously.

\subsection{Paper Contributions and Organization}
This article focuses on AC power networks and introduces an online learning procedure, based only on nodal measurements, for estimating the admittance matrix, which provides detailed information about grid topology and line parameters. The main novelties of this paper are fourfold. First, different from \cite{ardakanian2019identification, yuan2016inverse}, this work proposes a recursive identification algorithm to estimate the admittance matrix, enabling on-the-fly update of topology and fault detection in AC networks that change over time. Second, we provide formulae for deducing a transformation matrix that does away with redundant parameters when the admittance matrix is symmetric and Laplacian. Third, we tap into the principles of optimal experiment design and discuss an approach to compute suitable generator voltages which, when complemented with the base recursive algorithms, accelerates the admittance matrix estimation. Finally, by means of a simulation example, we demonstrate that our method outperforms those existing in literature.   

The remainder of \cref{sec:introduction} introduces relevant preliminaries and notation. \cref{sec:model} recaps network models and motivates the grid identification problem. \cref{sec:recursive} describes the recursive estimation algorithm whereas optimal DoE procedure is discussed in \cref{sec:optimal}. Proposed algorithms are validated via numerical studies in \cref{sec:experiments}. Finally, conclusions are drawn in \cref{sec:conclusions}. 

\subsection{Preliminaries and Notation}
\label{subsec:prem}
\textit{Sets, vectors, matrices, and random variables:} let $j=\sqrt{-1}$ represent the imaginary unit. For a finite set $\mathcal{V}$, $|\mathcal{V}|$ denotes its cardinality. An $(m,n)$ matrix is one with $m$ rows and $n$ columns. Given $x \in \mathbb{C}^{n}$, $\overline{x}$ is its complex conjugate and $[x]$ the associated diagonal matrix of order $n$. Throughout, $1_n$ and $0_n$ are $n$-dimensional vectors of all ones and zeros, whereas $\mathbb{I}_n$ and $\mathbb{O}_{n \times m}$ represent $(n,n)$ identity and $(m,n)$ zero matrices, respectively. The unit vector $e_i,~i= 1,..., n$ is the $i^{th}$ column of $\mathbb{I}_n$. For a matrix $A$, $A^\top$ denotes its transpose, ${A}^\herm $ its Hermitian (complex conjugate) transpose, and $A_{i}$ its $i^{th}$ column vector. The Kronecker product between matrices $A$ and $B$ is $A \otimes B$. A positive definite matrix $A$ and a positive semidefinite matrix $B$ verify $A \succ 0$ and $A \succeq 0$, respectively. We let $\mathcal{N}(x, A)$ designate a Gaussian random vector of dimension $n$, where $x$ is the mean vector and $A$ the covariance matrix.

\textit{Matrix vectorization operators:} 
We indicate by  $\vect(A) =[A_{1\cdot}^\top \cdots A_{n\cdot}^\top]^\top$ the $mn$-dimensional stacked column vector. Furthermore, if $A$ is a square matrix, then the half-vectorization operator $\vech(A)$ provides the $n(n+1)/2$-dimensional vector obtained by eliminating all supradiagonal elements of $A$ from $\vect(A)$. Furthermore, $\ve(A)$ is the $n(n-1)/2$-dimensional vector obtained by removing diagonal elements of $A$ from $-\vech(A)$.  

\section{Network Modeling and Problem Formulation}
\label{sec:model}
In this section, we review relevant algebraic models for AC power networks, and detail the grid identification problem.  

\subsection{Distribution Network Modelling}
\label{subsec:network-model}
An electric distribution network is modeled as an undirected, weighted, and connected graph $\mathcal{G}(\mathcal{V},\mathcal{E},\mathcal{W})$, where the nodes in $\mathcal{V} = \{1, 2, \dots, n\}$ represent buses, either generating units or loads, and edges represent power lines, each connecting two distinct buses and modeled after the standard lumped $\pi-$model \cite{wollenberg1996power}. To each edge $(i, k) \in \mathcal{E}$ we associate a complex weight equal to the line admittance $y_{ik}=g_{ik}+jb_{ik} \in \mathcal{W}$, where $g_{ik} > 0$ is the line conductance and $b_{ik} \in \mathbb{R}$ the line susceptance. The network is then completely represented by the admittance matrix $Y \in \mathbb{C}^{n \times n}$, with elements $Y_{ik}=-y_{ik}$ for $i \neq k$ and $Y_{ik}=\sum_{i=1,i\neq k}^{n}y_{ik} +y_{s,i}$, where $y_{s,i} \in \mathbb{C}$ is the shunt element at the $i^{th}$ bus. If the network does not include phase-shifting transformers and power lines are not compensated by series capacitors, $Y$ is symmetric. In addition, $Y$ is Laplacian if the shunt elements $y_{s,i}$ are not present \cite{Kundur, Taleb}: this happens, for instance, in small- and medium-sized networks, with line lengths less than 60 km.

Throughout this work, we consider a phase-balanced power network operating in sinusoidal regime. To each bus $h \in \mathcal{V}$, we associate a voltage phasor $|v^{(h)}|e^{j\theta^{(h)}}\in \mathbb{C}$, where $|v^{(h)}|$ is the voltage magnitude and $\theta^{(h)} \in \mathbb{R}$ the voltage angle, a current injection phasor $|i^{(h)}|e^{j\phi^{(h)}}\in \mathbb{C}$, and a complex apparent power $s^{(h)}=p^{(h)}+jq^{(h)}$ with $p^{(h)},q^{(h)}\in \mathbb{R}$. As standard in distribution networks, we assume the point of common coupling (PCC) to be the slack bus with fixed $|v^{(0)}|=1$ and $\theta^{(0)}=0$. The remaining buses are classified as generators $\mathcal{S}$ and loads $\mathcal{L}$, such that $\mathcal{V}=\mathcal{S}\cup\mathcal{L}\cup\{0\}$. For notational simplicity we set $|\mathcal{V}|=n$, $|\mathcal{S}|=g$, and $|\mathcal{L}|=l$, where $g,l\geq 1$. In active distribution networks, generators are DERs generally interfaced with inverters equipped with voltage and/or power control \cite{Molzahn}. The current-voltage relation descending directly from Kirchhoff's and Ohm's laws is given by
\begin{equation}
i = Yv,
\label{eq:current-voltage}
\end{equation}
where $i \in \mathbb{C}^{n}$ is the vector of nodal current injections, and $v \in \mathbb{C}^{n}$ the vector of nodal voltages \cite{dorfler2018electrical}. Similarly, one can deduce the relation between the vectors of nodal complex power injections $s$ and nodal voltages $v$ as 
\begin{equation}
{s}=[{v}]{(\overline{Yv})}.
\label{eq:Powerflow}
\end{equation}

\subsection{Identification of AC distribution networks}
The identification problem for AC distribution networks, defined in \cite{yuan2016inverse, ardakanian2019identification}, aims at reconstructing the admittance matrix from a sequence of voltage and current phasor measurements corresponding to different steady states of the system. Similar to \cite{yuan2016inverse, ardakanian2019identification}, our work makes the following assumption.
\begin{assum}
	\label{ass:full-observability}
	The network is fully observable, i.e., voltage and current measurements are available at each node. 
\end{assum}
Let $t$ be the number of samples collected up to a certain time instant, $v_\tau \in \mathbb{C}^n$ and $i_\tau \in \mathbb{C}^n$ the vectors of current injections and voltages for $\tau = 1, \dots, t$. From \eqref{eq:current-voltage}, one can obtain
\begin{equation}
	\label{eq:current-voltage-samples}
	I_t = YV_t,
\end{equation}
where $V_t = [v_1, v_2, \dots, v_t] \in \mathbb{C}^{t \times n}$, and $I_t = [i_1, i_2, \dots, i_t] \in \mathbb{C}^{t \times n}$. The admittance matrix $Y$, encoding both line parameters and topological information, is typically sparse as each bus is not connected to all the remaining nodes. Moreover, as explained in \cref{subsec:network-model}, $Y$ has other structural properties: for most distribution networks, which lack phase-shifting transformers and feature short lines, the following assumption is satisfied.
\begin{assum}
	\label{ass:laplacian-matrix}
	The admittance matrix $Y$ is symmetric and Laplacian, that is, $Y1_n=0_n$. 
\end{assum}
Both the symmetric and Laplacian structures of admittance matrix  greatly reduce the number of entries of Y to be estimated. This observation is further explored in the subsequent section.

\section{Recursive Online Identification \label{sec:recursive}}
In the ideal case of noiseless current and voltage measurements, the identification of $Y$ reduces to solving the system of linear equations \eqref{eq:current-voltage-samples}, once enough samples are collected. Unfortunately, $\mu$PMUs and other metering devices introduce an error commonly modeled as white noise \cite{ardakanian2019identification, deka2018joint}. In the following, for sake of simplicity, it is assumed that the measurement error is distributed as $\mathcal{N}(0_n, \sigma^2\mathbb{I}_n)$, thus implying that the error at each bus has the same variance. As will be clear in the sequel, extensions to different covariance matrices are immediate. 

Upon vectorizing either side of equation \eqref{eq:current-voltage-samples}, one obtains 
\begin{equation}
\vect(I_t) = \vect(YV_t) = \left(V_t^\top \otimes \mathbb{I}_n \right) \vect (Y).
\label{eq:i-v-vectorized}
\end{equation}
Regression methods can be used to get a least squares estimate of $\vect (Y)$ -- the vector representation of the admittance matrix. Before diving into the online estimation algorithm for the admittance matrix $Y$, we note that $\vect (Y)$ comprises $n^2$ entries of $Y$. Being symmetric, $Y$ has at max $n(n+1)/2$ non-redundant entries, which further reduce to $n(n-1)/2$ under Assumption \ref{ass:laplacian-matrix}. Redundant entries in $\vect (Y)$ can be eliminated by means of $\vech (Y)$ -- if $Y$ is symmetric, or $\ve (Y)$ -- when Assumption \ref{ass:laplacian-matrix} holds. Relevant relations between the matrix vectorization operators are summarized in the following Lemma.

\begin{lem}
	\label{lem:duplication}
Given $n$, there is a unique $(n^2, n(n+1)/2)$ matrix $D$, called duplication matrix, such that
\begin{equation}
\vect(Y) = D\vech (Y).
\end{equation}
Furthermore, under Assumption \ref{ass:laplacian-matrix}, there exists a unique $(n(n+1)/2, n(n-1/2))$ matrix $T$ such that 
\begin{equation}
\vech(Y) = T\ve (Y).
\end{equation}
\end{lem}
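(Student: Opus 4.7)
My plan is to prove both claims by explicit construction, deducing uniqueness from a bijectivity argument. The key structural observation is that, once the relevant assumptions are in place, the right-hand side of each identity can be prescribed freely: every vector in $\mathbb{C}^{n(n+1)/2}$ arises as $\vech(Y)$ for some symmetric $Y$, and (under \cref{ass:laplacian-matrix}) every vector in $\mathbb{C}^{n(n-1)/2}$ arises as $\ve(Y)$ for some symmetric Laplacian $Y$, since the Laplacian identity forces the diagonal of $Y$ from its off-diagonal part. Hence each of $D$ and $T$ is uniquely determined by its action on a basis of its domain, and it suffices to exhibit one matrix that works in each case.

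To build $D$, I would index its $n^{2}$ rows by the ordered pairs $(i,k)$ used to list $\vect$ and its $n(n+1)/2$ columns by the pairs $(p,q)$ with $p \geq q$ used to list $\vech$, and then set the $((i,k),(p,q))$-entry to $1$ whenever $\{i,k\}=\{p,q\}$ and to $0$ otherwise. A direct check shows that, for a symmetric $Y$, this construction places $Y_{pp}$ once at position $(p,p)$ of $\vect(Y)$ and places $Y_{pq}=Y_{qp}$ simultaneously at positions $(p,q)$ and $(q,p)$, thereby reproducing $\vect(Y)=D\vech(Y)$.

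To build $T$, I would exploit the Laplacian identity $Y_{ii}=\sum_{k\neq i}(-Y_{ik})$ together with the fact that $\ve(Y)$ lists exactly the quantities $-Y_{pq}$ with $p>q$. By symmetry of $Y$, for any $k\neq i$ the value $-Y_{ik}$ coincides with a single entry of $\ve(Y)$, namely the one indexed by the unordered pair $\{i,k\}$. Consequently, I would define $T$ row by row: a row of $\vech(Y)$ corresponding to a strict off-diagonal pair $(p,q)$, $p>q$, carries $-1$ in the column of $\ve(Y)$ for that same pair; a row corresponding to a diagonal pair $(i,i)$ carries $+1$ in every column whose pair contains the index $i$. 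A direct substitution then gives $\vech(Y)=T\,\ve(Y)$.

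The only real point of care in this plan will be the sign bookkeeping induced by the minus sign built into the definition of $\ve$: the off-diagonal rows of $T$ must contribute $-1$, while the diagonal rows must contribute $+1$, so that the two contributions align consistently with $Y_{ii}=\sum_{k\neq i}(-Y_{ik})$. Everything else is routine verification, and uniqueness in both parts follows at once from the bijectivity argument stated in the first paragraph.
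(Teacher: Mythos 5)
Your proposal is correct and follows essentially the same route as the paper: existence by explicit row-by-row construction of $T$ (diagonal rows of $\vech(Y)$ picking up $+1$ on every incident pair via the Laplacian identity, off-diagonal rows a single $-1$), and uniqueness from the fact that $\ve(Y)$ ranges over all of $\mathbb{C}^{n(n-1)/2}$ as $Y$ ranges over symmetric Laplacian matrices. The only differences are cosmetic: the paper cites the literature for $D$ rather than constructing it, and your surjectivity argument makes explicit the spanning property that the paper's uniqueness-by-contradiction step uses implicitly.
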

\begin{proof}
Existence and uniqueness of $D$ are proven in \cite{magnus1980elimination}. The proof of existence and uniqueness of $T$ are in Appendix \ref{app:existence-of-t}, while the construction is shown in Appendix \ref{app:derivation-of-t}. 
\end{proof}
Python and MATLAB codes for constructing $D$ and $T$ are available at \cite{fabbiani2020gists}. Both $T$ and $D$ can be constructed given the number of nodes in the network $n$, therefore they must not be estimated from measurements.

Hereafter, we consider the case where Assumption \ref{ass:laplacian-matrix} holds. Using Lemma \ref{lem:duplication}, we recover the full vectorization of $Y$ as 

\begin{equation}
\vect(Y) = D \vech(Y) = DT\ve (Y).
\label{eq:vectorization}
\end{equation}

By combining \eqref{eq:i-v-vectorized} and \eqref{eq:vectorization} we get
\begin{equation}
\vect(I_t) = \left(V_t^\top \otimes \mathbb{I}_n \right) DT \ve(Y).
\end{equation}
Introducing the following matrices and vectors
\begin{subequations}
    \label{eq:rls-matrices}
	\begin{align}
	A_t &\coloneqq \left(v_t^\top \otimes \mathbb{I}_n\right)DT, \label{eq:rls_a}\\
	\underline{A_t} &\coloneqq \left(V_t^\top \otimes \mathbb{I}_n \right)DT, \\
	b_t &\coloneqq \vect (I_t), ~ \text{and}~ \\
	x &\coloneqq \ve (Y),
	\end{align}
\end{subequations}
the least squares estimation problem at time $t$ writes as
\begin{equation}
\hat x_t = \arg \min_x \norm{b_t - \underline{A_t}x}^2.
\label{eq:least-square}
\end{equation} 
The formulation in \eqref{eq:least-square} equally weights samples at any time instant, which can be detrimental for  time-varying distribution networks and smart grids \cite{ardakanian2017event}. By introducing a forgetting factor $\lambda \in (0, 1]$ \cite{hayes2009statistical}, we reformulate the estimation problem as
\begin{equation}
\hat x_t = \arg \min_x \sum_{i=1}^{t} \lambda^{t-i} \norm{i_i - A_ix }^2. \label{eq:weighted-least-square}
\end{equation}
Given an initial guess of the parameter vector $\hat x_0$ and the matrix $Z_0 \coloneqq \sigma^{-2} \cov[\hat x_0]$, estimates of $\hat x_t$ and $Z_t \coloneqq \sigma^{-2} \cov[\hat x_t]$ can be obtained by the recursive least squares (RLS) algorithm \cite[p. 541]{hayes2009statistical}:
\begin{subequations}
	\label{eq:rls}
	\begin{align}
	\hat x_t &= \hat x_{t-1} + Z_tA_t^\herm  \left(i_t - A_t \hat x_{t-1} \right) \\
	Z_t &= (\lambda Z_{t-1}^{-1} + A_t^\herm  A_t)^{-1} \label{eq:rls_z} \\
	&= \lambda^{-1} (Z_{t-1} - Z_{t-1}A_t^\herm  \left( \lambda \mathbb{I}_n + A_t Z_{t-1} A_t^\herm  \right)^{-1} A_t Z_{t-1}).
	\end{align}
\end{subequations}
From $\hat x_t$, one can derive the estimated admittance matrix $\hat Y_t = DT\hat x_t$. As shown in \cref{sec:model}, the complex elements of the admittance matrix capture both the conductance and the susceptance of the lines. In a real scenario, existing information or batch data can be used to improve the initial guess $x_0$ and $Z_0$. 

The RLS algorithm with constant or bounded forgetting factor is known to have notable stability and convergence properties \cite{liu2013convergence, bittanti1990recursive}. For noisy measurements, RLS with constant forgetting factor is consistent under some excitation conditions only when the forgetting factor is 1 \cite{liu2013convergence}. Otherwise, RLS has limited memory, preventing it from achieving consistency, which is  generally traded off with the ability to follow changes in the parameters. In order to establish a basic degree of competency for the RLS estimator \eqref{eq:rls}, we consider the case of a static network with noise-free measurements. In \cref{sec:experiments} we present numerical simulations to show how the identification method can tolerate noise and can adapt its estimation to changes in network topology.

Classical works establish that, when data are not affected by noise, the error on the parameters is bounded, and its projection onto the subspace for which persistent excitation holds -- see \cite{bittanti1990recursive} for a definition -- converges to zero as the number of samples approaches infinity. Still, the arguments in \cite{bittanti1990recursive} consider only real-valued, single-input-single-output settings. Here, we provide convergence results pertaining to our case, which involves complex inputs, outputs and parameters, and a multivariate output at each iteration.

\begin{lem}
	\label{lem:convergence-of-rls}
	Consider the RLS algorithm \eqref{eq:rls}. Assume that $Y$ is constant in time -- therefore, $x_t = x$, $V_t$ is full-rank and measurement are not affected by noise. Define the error on the parameters $\Tilde{x}_t \coloneqq \hat{x}_t - x$. For any $\hat x_0$ and $Z_0 = Z_0^\herm  \succ 0$, (i) the norm of the error  $\norm{\Tilde{x}_t}$ is bounded, and (ii) the projection of $\Tilde{x}_t$ on the excitation subspace converges to zero as $t$ approaches infinity.
\end{lem}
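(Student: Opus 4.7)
The plan is to analyze the RLS recursion by deriving a closed-form expression for the parameter error $\tilde{x}_t$, then exploiting the structure of the information matrix $Z_t^{-1}$ to obtain boundedness and, separately, convergence on the excitation subspace.

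First, I would substitute the noise-free measurement relation $i_t = A_t x$ into the first equation of \eqref{eq:rls} and subtract $x$ from both sides, obtaining $\tilde{x}_t = (\mathbb{I} - Z_t A_t^\herm A_t)\tilde{x}_{t-1}$. Rewriting \eqref{eq:rls_z} as $Z_t^{-1} = \lambda Z_{t-1}^{-1} + A_t^\herm A_t$ and premultiplying by $Z_t$ yields the identity $\mathbb{I} - Z_t A_t^\herm A_t = \lambda Z_t Z_{t-1}^{-1}$, so that $Z_t^{-1}\tilde{x}_t = \lambda Z_{t-1}^{-1}\tilde{x}_{t-1}$. Iterating this one-step telescoping relation back to $t=0$ produces the compact form
\begin{equation*}
\tilde{x}_t = \lambda^{t} Z_t Z_0^{-1}\tilde{x}_0,
\end{equation*}
while unfolding \eqref{eq:rls_z} gives $Z_t^{-1} = \lambda^{t}Z_0^{-1} + R_t$ with $R_t \coloneqq \sum_{\tau=1}^{t}\lambda^{t-\tau}A_\tau^\herm A_\tau \succeq 0$.

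Claim (i) then follows at once from the positive semidefiniteness of $R_t$: it implies $Z_t^{-1} \succeq \lambda^t Z_0^{-1}$ in the L\"owner order, hence $\lambda^t Z_t \preceq Z_0$ and $\norm{\lambda^t Z_t} \leq \norm{Z_0}$. Combining with the closed-form expression for $\tilde{x}_t$ gives $\norm{\tilde{x}_t} \leq \norm{Z_0}\norm{Z_0^{-1}}\norm{\tilde{x}_0}$ uniformly in $t$, which is the claimed boundedness.

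For claim (ii), I would define the excitation subspace $\mathcal{E}$ as the orthogonal complement of $\bigcap_{\tau \geq 1}\ker(A_\tau)$. Any $v \in \mathcal{E}^\perp$ satisfies $A_\tau v = 0$ for every $\tau$, hence $R_t v = 0$, so in the block decomposition $\mathbb{C}^{N} = \mathcal{E}\oplus\mathcal{E}^\perp$ the matrix $R_t$ reduces to a single nonzero block $R_t^{11}$ acting on $\mathcal{E}$. Applying the Schur-complement inversion formula to $Z_t = (\lambda^t Z_0^{-1} + R_t)^{-1}$, the $\mathcal{E}$-restricted blocks of $Z_t$ evaluate to $[Z_t]_{11} = (\lambda^t S_0 + R_t^{11})^{-1}$ and $[Z_t]_{12} = -[Z_t]_{11}G^{12}(G^{22})^{-1}$, where $G^{ij}$ are the blocks of $Z_0^{-1}$ and $S_0 \succ 0$ is the Schur complement of $G^{22}$ in $Z_0^{-1}$. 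Under the persistent excitation hypothesis on $\mathcal{E}$ (the minimum eigenvalue of $R_t^{11}$ diverging), both $[Z_t]_{11}$ and $[Z_t]_{12}$ vanish in operator norm, and since $\lambda^t \leq 1$ the projection $\Pi_\mathcal{E}\tilde{x}_t = \lambda^{t}\Pi_\mathcal{E} Z_t Z_0^{-1}\tilde{x}_0$ therefore tends to zero.

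The main obstacle lies in claim (ii): a scalar growth of $u^\herm R_t u$ along a direction $u \in \mathcal{E}$ is \emph{not} by itself sufficient to upper-bound $u^\herm Z_t u$, because $M \mapsto M^{-1}$ is monotone only in the L\"owner order, not on individual quadratic forms of non-eigenvectors. The Schur-complement decomposition sidesteps this by effectively restricting the inversion to the block $R_t^{11}$, on which a uniform lower bound is available. Stating and verifying that lower bound -- the analogue of the classical persistent excitation condition of \cite{bittanti1990recursive} in the Hermitian, multi-output setting induced by \eqref{eq:rls-matrices} -- is the delicate technical step; the complexness of inputs, outputs, and parameters does not otherwise alter the skeleton of the argument, provided Hermitian conjugates consistently replace transposes.
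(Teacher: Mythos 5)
Your proof of part (i) is correct and takes a genuinely different route from the paper. The closed form $\tilde{x}_t=\lambda^{t}Z_tZ_0^{-1}\tilde{x}_0$, obtained from the telescoping identity $Z_t^{-1}\tilde{x}_t=\lambda Z_{t-1}^{-1}\tilde{x}_{t-1}$, together with the L\"owner bound $\lambda^{t}Z_t\preceq Z_0$, gives the uniform bound $\norm{\tilde{x}_t}\leq\norm{Z_0}\,\norm{Z_0^{-1}}\,\norm{\tilde{x}_0}$ directly. The paper instead runs a Lyapunov argument on $W_t=\tilde{x}_t^\herm Z_t^{-1}\tilde{x}_t$, proving $W_t\leq\lambda W_{t-1}$ via the positive semidefiniteness of $\mathbb{I}_n-A_tZ_tA_t^\herm$ (imported from \cite{liu2013convergence}); the two arguments carry the same information, but yours dispenses with that auxiliary fact and is arguably cleaner.

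Part (ii), however, has a genuine gap. First, a bookkeeping error: $R_t=\sum_{\tau=1}^{t}\lambda^{t-\tau}A_\tau^\herm A_\tau$ has geometrically decaying weights, so for $\lambda<1$ and bounded regressors $\norm{R_t}\leq\sup_\tau\norm{A_\tau}^2/(1-\lambda)$ and $\text{mineig}(R_t^{11})$ \emph{cannot} diverge; since you bound the prefactor by $\lambda^{t}\leq 1$ separately, your conclusion requires $[Z_t]_{11}\to 0$ itself, which is unattainable. The factor $\lambda^{t}$ must be kept inside the inverse, so that the relevant quantity is $\lambda^{t}[Z_t]_{11}=(S_0+\lambda^{-t}R_t^{11})^{-1}$ and the divergence needed is that of $\sum_\tau\lambda^{-\tau}[A_\tau^\herm A_\tau]_{11}$ (implied by that of the unweighted sum). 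Second, and more fundamentally, the excitation subspace is not the orthogonal complement of $\bigcap_\tau\ker A_\tau$: a direction $v$ can leave the common kernel while $\sum_\tau\norm{A_\tau v}^2$ remains bounded, so on your $\mathcal{E}$ the eigenvalue divergence you invoke need not hold; conversely, on the excitation subspace as defined in \cite{bittanti1990recursive} (the divergent eigendirections of the unweighted sum), the off-diagonal blocks $R_t^{12}$ need not vanish, so the Schur-complement formula you rely on is unavailable. This tension --- block structure only for the wrong subspace, guaranteed divergence only for the right one --- is precisely what the lemma in the appendix of \cite{bittanti1990recursive} resolves, and it is the ingredient the paper imports (after checking it carries over to Hermitian matrices), closing the argument with the reverse triangle inequality applied to $\norm{G_t\tilde{x}_t^{(e)}+G_t\tilde{x}_t^{(u)}}\leq W_0^{1/2}$. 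You flag this as ``the delicate technical step'' but do not supply it, so claim (ii) is not established as written.
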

\begin{proof}
   See Appendix \ref{app:proof-of-rls-convergence}.
\end{proof}

\begin{rmk}
	Recursive least squares assumes that the matrix $V_t$ is full-rank. If not, one can still apply the method to learn part of the admittance matrix, as shown in \cite{ardakanian2019identification}.
\end{rmk}

\begin{rmk}
	RLS algorithm can also be applied to three-phase unbalanced networks. As detailed in \cite{ardakanian2019identification}, the variables to be measured are line-to-ground voltages and current injections for each phase of the nodes, while the admittance matrix to be estimated shares the properties described in \cref{sec:model}.
\end{rmk}

\section{Optimal Design of Experiment}
\label{sec:optimal}
While several learning approaches only capitalize on uncontrolled inputs and outputs, identification algorithms appropriately probing controllable DERs can improve the estimation of the admittance matrix. In this work, each DER is assumed to be equipped with a voltage controller -- necessary for networks with high photovoltaic integration \cite{Molzahn}. Targeting these controllers, we henceforth propose a modified version of the recursive estimation algorithm \eqref{eq:rls} where, at each iteration, DER voltages are set according to a D-optimal design \cite{atkinson2007optimum}, the purpose of which is to maximize the determinant of the Fisher information matrix of the model parameters. With reference to the least squares problem \eqref{eq:weighted-least-square}, the Fisher information matrix \cite{atkinson2007optimum} at time $t$ is $F_t = (\cov (x_t))^{-1}$.
As the measurement noise is a Gaussian vector $\mathcal{N}(\mathbf{0}_n, \sigma^2\mathbb{I}_n)$, we have
\begin{equation}
F_t = \sigma^{-2} Z_t^{-1} = \sigma^{-2} (\lambda Z_{t-1}^{-1} + A_t^\herm A_t).
\end{equation}
We note that $A_t$ depends on the nodal voltages $V_t$; see \eqref{eq:rls_a}. The D-optimal design is the result of the optimization problem
\begin{equation}
v_t^* = \arg \max_{v_t} \det(F_t).
\end{equation}
We observe that $\sigma$ does not influence the optimum and can thus be neglected. Moreover, upon applying the logarithm to the target function -- a common practice for improving numerical properties \cite[Chap. 10]{atkinson2007optimum}, we get
\begin{equation}
v_t^* = \arg \min_{v_t} - \log \det (\lambda Z_{t-1}^{-1} + A_t^\herm A_t).
\end{equation}

While formulating  the DoE problem, we need to take into account voltage limits for all nodes, as well as the active and reactive power dispatched by DERs. Furthermore, the power requirements of loads, expressed by the power flow equations \eqref{eq:Powerflow}, must be satisfied. By adding these constraints, we get the optimization problem
\begin{subequations}
	\label{eq:doe-problem}
	\begin{alignat}{3}
	&(v_t^*, p_t^*) = \arg && \min_{v_t, p_t}  - \log \det (\lambda Z_{t-1}^{-1} A_t^\herm A_t)   \label{eq:doe-target}\\
	&\text{subject to:}      &&  {s_t}=[{v_t}]{(\overline{\hat Y_{t-1}v_t})}  
	&& \label{eq:doe-pf-constraint}\\
	&  &&  v_{\min}^{(i)} \leq v_t^{(i)} \leq v_{\max}^{(i)} 	&&\forall i\in\mathcal{V} \label{eq:doe-v-constraint}\\
	&  &&  \theta_{\min}^{(i)} \leq \theta_t^{(i)} \leq \theta_{\max}^{(i)} &&\forall i\in\mathcal{V} \label{eq:doe-theta-constraint}\\
	&  &&  p_{\min}^{(k)} \leq p_t^{(k)} \leq p_{\max}^{(k)} &&\forall k\in\mathcal{S} \label{eq:doe-p-constraint}\\
	&  &&  q_{\min}^{(k)} \leq q_t^{(k)} \leq q_{\max}^{(k)}  &&\forall k\in\mathcal{S} \label{eq:doe-q-constraint},
	\end{alignat}
\end{subequations}
where $A_t$ depends on $V_t$ as in \eqref{eq:rls_a}.
It is worth noting that the computation of $Z_{t-1}^{-1}$ in \eqref{eq:doe-target} does not require the inversion of $Z_{t-1}$: from \eqref{eq:rls_z}, one has $Z_t^{-1} = \lambda Z_{t-1}^{-1} + A_t^\herm A_t$, which allows for a recursive update of $Z_t^{-1}$.

We also note that constraint \eqref{eq:doe-pf-constraint} depends on $\hat Y_{t-1}$, which is the most recent estimate of the unknown matrix $Y$. While this approximation makes it difficult to analyze the properties of the sequence $\{\hat{Y}_t\}_{t=0}^{\infty}$, numerical experiments described in \cref{sec:experiments} show that such an approach might be only slightly suboptimal with respect to using the real admittance matrix $Y$.

\begin{rmk}
	We note that the proposed DoE procedure helps achieve persistent excitation, which implies the information matrix of the parameters being full rank at each iteration \cite{bittanti1990recursive}. Indeed, since DoE aims at maximizing the determinant of the information matrix, its objective is in contrast with a loss of rank. Therefore, setting voltages as per \eqref{eq:doe-problem} helps satisfying the hypothesis of \cref{lem:convergence-of-rls}.
\end{rmk}
The DoE formulation \eqref{eq:doe-problem} is flexible: one can append more constraints to the optimization problem to cope with technical limitations. For example, the voltage of some DERs may be fixed, or power limitations for certain lines can be introduced. The solution of problem \eqref{eq:doe-problem} is the vector of all nodal voltages; however, voltage references are provided only to DERs as loads cannot generally be controlled.

\begin{rmk}
    The DoE problem \eqref{eq:doe-problem} outputs both voltage magnitude and active power for each generating unit. In this work, we assume that the former is directly used as a control reference, however, the latter can be equivalently adopted in case of power-controlled DERs. When excited with the power reference signal, the generating units cause voltage variations in the network \cite{cavraro2018graph, cavraro2019inverter}: the resulting current-voltage data can then be utilized in \eqref{eq:rls} for the admittance matrix estimation.
\end{rmk}

To summarize, given an initial guess of $\hat{x}_0$ and $\hat{Z}_0$, a value of $\lambda$, and active and reactive power demands for loads, the recursive estimation, enhanced with DoE, can be described by the following steps repeated at each time $t$.
\begin{enumerate}
	\item Solve the DoE problem \eqref{eq:doe-problem} for the nodal voltages $v_t^*$, using the current estimation $\hat{x}_{t-1}$ and $Z_{t-1}$.
	\item Provide the voltage set-point $|v_t^{*, (k)}|$ to DERs $k \in \mathcal{S}$.
	\item Collect measurements of current and voltage phasors from each bus $i \in \mathcal{V}$.
	\item Update the estimates of $\hat{x}_t$ and $Z_t$ using the RLS algorithm \eqref{eq:rls}.
\end{enumerate}

\section{Experiments} \label{sec:experiments}
In order to validate our algorithms, we set up simulations with standard testbeds. As discussed in \cref{sec:introduction}, identification is usually an issue only for distribution networks, while transmission networks are known and constantly monitored. However, to prove the generality of our method, we adopted an example of both a transmission and a distribution network.

\subsection{Experimental Setup}
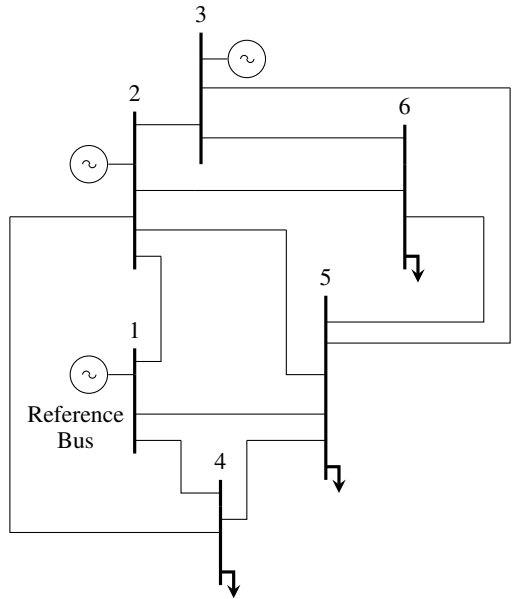
\begin{figure}
	\centering
	\begin{tikzpicture}[scale=0.7]
	
	\node at (-0.25,-0.75) {\small{Reference}};
	\node at (-0.25,-1.25){\small{Bus}};
	
	\draw
	(0,0) node [sin v source] (v1) {}; 
	\draw (v1.east)--++(0.5,0) coordinate(v1-cc); 
	\draw[very thick](v1-cc)--++ (0,0.5) coordinate(v1-up) node[above]{\small{1}}; 
	\draw[very thick](v1-cc)--++(0,-1.5) coordinate(v1-dn); 
	
	
	\draw
	(0,4) node [sin v source] (v2) {}; 
	\draw (v2.east)--++(0.5,0) coordinate(v2-cc); 
	\draw[very thick](v2-cc)--++ (0,1) coordinate(v2-up) node[above]{\small{2}}; 
	\draw[very thick](v2-cc)--++(0,-2) coordinate(v2-dn); 
	
	
	\draw
	(3,6) node [sin v source] (v3) {}; 
	\draw (v3.west)--++(-0.5,0) coordinate(v3-cc); 
	\draw[very thick](v3-cc)--++ (0,0.5) coordinate(v3-up) node[above]{\small{3}}; 
	\draw[very thick](v3-cc)--++(0,-2) coordinate(v3-dn); 
	
	
	\draw coordinate(v4-cc) at (2.5,-2.5); 
	\draw[very thick](v4-cc)--++ (0,0.5) coordinate(v4-up) node[above]{\small{4}}; 
	\draw[very thick](v4-cc)--++(0,-1.5) coordinate(v4-dn); 
	
	
	\draw coordinate(v5-cc) at (4.5,1); 
	\draw[very thick](v5-cc)--++ (0,0.5) coordinate(v5-up) node[above]{\small{5}}; 
	\draw[very thick](v5-cc)--++(0,-3) coordinate(v5-dn); 
	
	
	\draw coordinate(v6-cc) at (6,4); 
	\draw[very thick](v6-cc)--++ (0,.75) coordinate(v6-up) node[above]{\small{6}}; 
	\draw[very thick](v6-cc)--++(0,-2) coordinate(v6-dn); 
	
	\draw ($(v4-cc)-(0,0.5)$)--++(-4,0)--++(0,6) coordinate(i);
	\draw let \p{A}=($(v2-cc)-(i)$) in
	(i) --++ (\x{A},0);
	
	\draw ($(v6-cc)-(0,0.5)$)--++(-4,0) coordinate(i);
	\draw let \p{A}=($(v2-cc)-(i)$) in
	(i) --++ (\x{A},0);
	
	\draw ($(v6-up)-(0,0.25)$) coordinate(i);
	\draw let \p{A}=($(v3-cc)-(i)$) in
	(i) --++ (\x{A},0);
	
	\draw ($(v5-cc)$)--++ (3,0) --++ (0,2) coordinate(i);
	\draw let \p{A}=($(v6-cc)-(i)$) in
	(i) --++ (\x{A},0);
	
	\draw ($(v5-cc)-(0,0.4)$)--++ (3.5,0) --++ (0,4.85) coordinate(i);
	\draw let \p{A}=($(v3-cc)-(i)$) in
	(i) --++ (\x{A},0);
	
	\draw ($(v5-cc)-(0,1.75)$) coordinate(i);
	\draw let \p{A}=($(v1-cc)-(i)$) in
	(i) --++ (\x{A},0);
	
	\draw ($(v5-up)-(0,1.5)$) --++ (-0.75,0)--++(0,2.75) coordinate(i);
	\draw let \p{A}=($(v2-up)-(i)$) in
	(i) --++ (\x{A},0);
	
	\draw ($(v4-up)-(0,0.25)$)--++(-0.75,0)-- ++(0, 1) coordinate(i);
	\draw let \p{A}=($(v1-cc)-(i)$) in
	(i) --++ (\x{A},0);
	
	\draw ($(v1-up)-(0,0.25)$) --++ (0.5,0) --++ (0,2) coordinate(i);
	\draw let \p{A}=($(v2-cc)-(i)$) in
	(i) --++ (\x{A},0);
	
	\draw ($(v2-up)-(0,0.25)$) coordinate(i);
	\draw let \p{A}=($(v3-cc)-(i)$) in
	(i) --++ (\x{A},0);
	
	\draw ($(v4-cc)-(0,0.25)$) --++ (0.5,0) --++ (0,1.5) coordinate(i);
	\draw let \p{A}=($(v5-cc)-(i)$) in
	(i) --++ (\x{A},0);
	
	\draw[very thick,-stealth]($(v4-dn)+(0,0.25)$)--++(0.25,0)--++(0,-0.5cm);
	\draw[very thick,-stealth]($(v5-dn)+(0,0.25)$)--++(0.25,0)--++(0,-0.5cm);
	\draw[very thick,-stealth]($(v6-dn)+(0,0.25)$)--++(0.25,0)--++(0,-0.5cm);
	\end{tikzpicture}
	\caption{A representative diagram of grid T, the 6-bus transmission network \cite{wollenberg1996power}. Buses 1, 2, and 3 are generators, while 4, 5, and 6 are loads.}
	\label{fig:6-bus-grid-diagram}
\end{figure}

\begin{figure}
	\centering
	\includegraphics[scale=0.5]{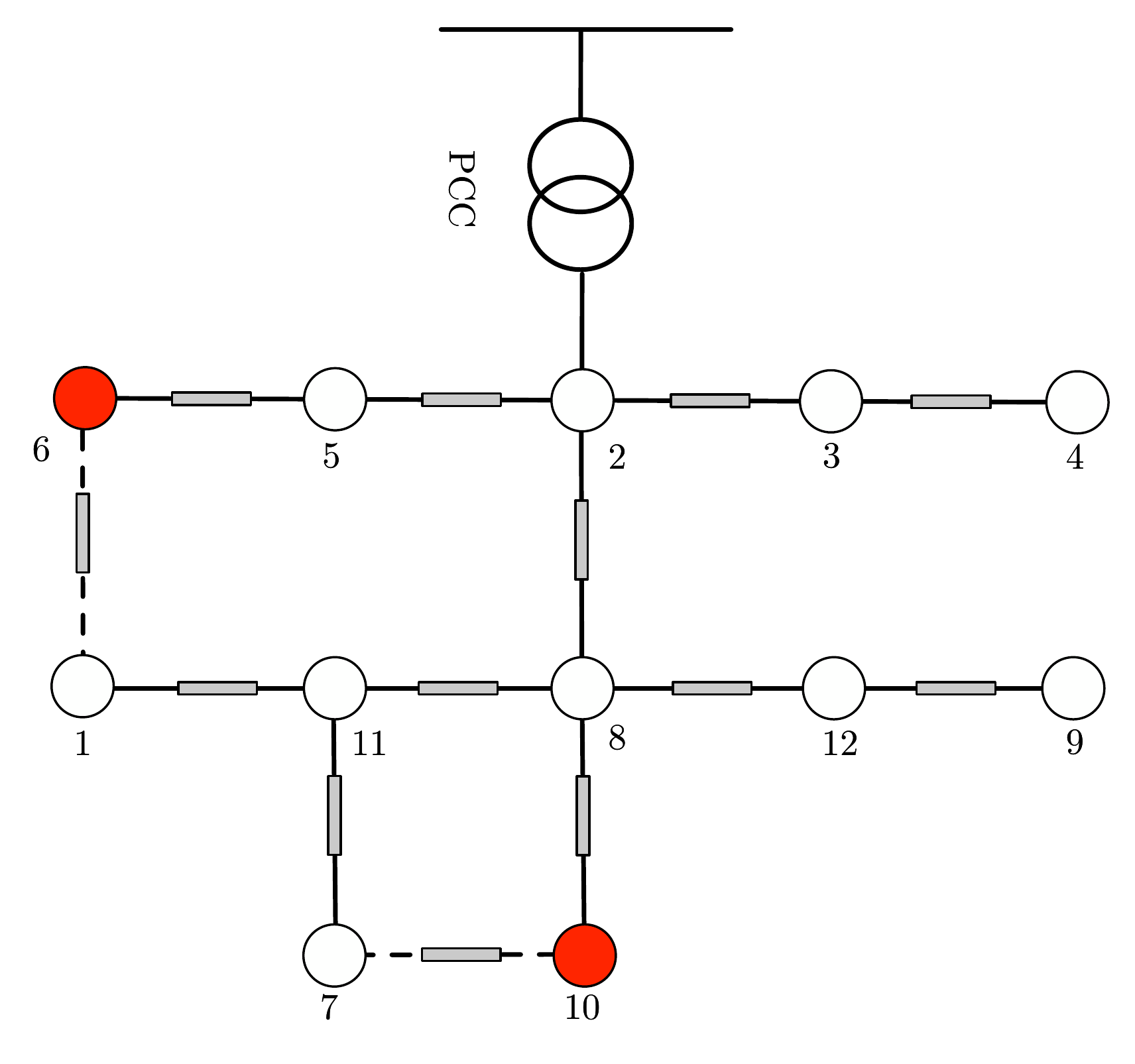}
	\caption{A representative diagram of grid D, the modified IEEE 13-bus feeder \cite{schneider2017analytic}. Buses {\protect\tikz \protect\draw[black, fill=red] (0,0) circle (4pt);} and {\protect\tikz \protect\draw[black] (0,0) circle (4pt);} represent generators and loads, respectively.}
	\label{fig:13-bus-grid-diagram}
	\vspace{-0.5cm}
\end{figure}

We considered two grids: the 6-bus transmission network by Wood and Wollenberg (grid T) \cite[p. 104]{wollenberg1996power} and a modified version of the IEEE 13-bus radial feeder (grid D) \cite{schneider2017analytic}. While the method could scale to much larger networks in theory, in practice collinearity, although mitigated by the design of experiment, still leads to numerical instability for large networks.

In order to test the proposed method on a meshed network, we added two lines to grid D, one connecting bus 1 with 6, and the other bus 7 with 10. As all the lines have negligible capacitance, the admittance matrix of grid D is Laplacian. Conversely, in grid T shunt capacitances are not negligible, resulting in a symmetric, yet non-Laplacian, admittance matrix. 

The presence of controllable generators is a requirement for the application of design of experiment. While grid T features 3 power sources, in grid D distributed generation is introduced through the addition of controllable power sources to buses 6 and 10. Grid D is represented in \cref{fig:13-bus-grid-diagram}, while grid T is displayed in \cref{fig:6-bus-grid-diagram}.

In grid T, load profiles were generated with incorrelated Gaussian active and reactive load fluctuations, centered on the nominal values. This procedure is justified by the observation that, over short periods of time, active and reactive power demands of loads can be modeled as Gaussian random variables \cite{sedghi2015statistical, babakmehr2016compressive}. In grid D, a more realistic setup was adopted: load profiles with one-minute granularity were extracted from the public Pecan Street dataset \cite{pecan2020dataport}. Since this dataset did not include reactive power, a random lagging power factor between 0.85 and 0.95 was considered. Following the procedure adopted in \cite{ardakanian2019identification}, we connected a random number of customers between 5 and 15 to each node. For both grid T and D, we used the AC power flow solver MATPOWER to derive nodal current and voltage phasors \cite{zimmerman2011matpower}.

For each grid, we considered two scenarios to asses the performance of our method in providing an accurate estimate of the admittance matrix. Scenario 1 looks at a network whose topology does not change over time and allows for a comparison between our online algorithm and batch methods like ordinary least squares (OLS) and adaptive lasso \cite{ardakanian2019identification}, whereas scenario 2 considers a time-varying configuration. More specifically, scenario 2 simulates a fault leading to tripping of a line. In grid T, the fault happens on the line connecting bus 2 with 6, while in grid D it impacts the line between bus 7 and 10. Batch algorithms cannot be applied to temporally varying networks and are thus excluded by the tests on scenario 2. In this respect, scenario 2 illustrates the main value of online methods over offline approaches.

We considered three different online estimation methods: 
\begin{itemize}
	\item RLS1, solely imposing the symmetric structure of $Y$ by adopting the parametrization $\boldsymbol{x} = \vech(Y)$;
	\item RLS2, forcing a Laplacian structure of $Y$ by adopting the parametrization $\boldsymbol{x} = \ve(Y)$, as in \eqref{eq:rls-matrices} and \eqref{eq:rls};
	\item DoE, where the generator voltages, excluding the slack bus, are set according to the design-of-experiment procedure presented in \cref{sec:optimal}. The generated inputs and the corresponding outputs are fed to RLS1 if the admittance matrix of the network under consideration if symmetric, and to RLS2 if it is Laplacian.
\end{itemize}
RLS2 was not tested on grid T, as not suitable to the non-Laplacian structure of the admittance matrix of that network. The solution of the design-of-experiment problem \eqref{eq:doe-problem} was computed using an interior-point non-convex solver.

In order to assess the identification performance, we used the error metrics
\begin{subequations}
	\label{eq:power-error-metrics}
	\begin{align}
		m_F & \coloneqq \Vert Y - \hat Y \Vert_\text{F}, \\
		m_{\max} & \coloneqq \Vert Y - \hat Y \Vert_{\max}, \\
		m_R & \coloneqq \Vert Y - \hat Y \Vert_\text{F} / \Vert Y \Vert_\text{F},
	\end{align}
\end{subequations}
where subscripts $F$ and $\max$ denote the Frobenius norm and the max norm, respectively. The metric $m_\text{F}$ assesses the overall goodness of the estimation, $m_{\max}$ is intended to capture possible issues in the identification of single elements, while $m_\text{R}$ provides a relative measure of the identification error.

In all the experiments, we introduced a Gaussian measurement error $\mathcal{N}(\mathbf{0}_n, \sigma^2\mathbb{I}_n)$ on both the real and the imaginary part of the measurements. In both grid T and D, we chose $\sigma$ so that the accuracy $3\sigma$ was 0.1\% of the average magnitude of the measurement, a figure compatible with the characteristics of real metering devices \cite{cavraro2018graph}. This led to $\sigma = 10^{-5}$ in grid D and $\sigma = 10^{-4}$ for grid T. The recursive estimation algorithms were initialized with $\hat{\boldsymbol{x}}_0 = \delta \boldsymbol{1}$, $\delta = 10^{-4}$ and $Z_0 = K \mathbb{I}$, $K = 10^4$, where $\boldsymbol{1}$ and $\mathbb{I}$ have suitable dimensions. The forgetting factor was set to $\lambda = 0.8$.

\subsection{Experimental Results}
For sake of completeness, we present the results on both grids D and T and scenarios 1 and 2. There are little substantial differences, as the following sections show. 

\subsubsection{Grid D}
For scenario 1, \cref{tab:estimation-error-grid-d} shows the comparison with benchmarks after 100 iterations, when the estimates provided by all online algorithms no longer improve. The error metrics can be noticed to be of the same order of magnitude for all methods; although RLS1 and RLS2 achieve poorer performance than OLS and Lasso. This is expected as both OLS and Lasso are batch estimators making use of simultaneous use of all the collected data. We also note that DoE outperforms all other methods, except for Lasso.

\begin{table}
	\centering
	\begin{tabular}{l r r r}
		\hline 
		{} & $m_\text{F} \; [\times 10^{-2}]$ & $m_{\max} \; [\times 10^{-2}]$ & $m_\text{R}$ \\
		\hline 
		OLS (batch) & 5.44 & 1.69 & 0.055\% \\
		Adaptive Lasso (batch) & 2.58 & 0.87 & 0.026\% \\
		RLS1 & 9.55 & 3.84 & 0.095\% \\
		RLS2 & 7.97 & 3.26 & 0.080\% \\
		DoE & 4.74 & 1.27 & 0.047\% \\
		\hline
	\end{tabular}
	\caption{Error metrics grid D, scenario 1, after 100 samples.}
	\label{tab:estimation-error-grid-d}
\end{table}

In both scenarios 1 and 2, DoE achieves faster convergence as well as better accuracy than other iterative methods; see \cref{fig:error-metrics-grid-d}. The downside is the stress on generator voltages, which are subjected to frequent changes (\cref{fig:gen-volt-grid-d}). Nevertheless, due to constraints in the formulation of the design problem \eqref{eq:doe-problem}, both voltage set-points and realized voltages stay within the prescribed interval, which is $[0.95, 1.05]$ p.u. In both scenarios, $m_{\max}$ follows the same trend as $m_\text{F}$ until convergence to a low value, thus ruling out issues about the estimation of specific elements of $Y$.

In the context of scenario 2, the error in the estimation of $y_{7, 10}$ (See \cref{fig:trip-line-estimation-grid-d}) is worth a few comments. Note that $|y_{7, 10}| = 9.8$ up to $t=100$, and subsequently drops to zero as a consequence of the simulated fault. All our recursive implementations are able to quickly adapt to a change in topology, thus proving the usefulness of online estimation. After mere two iterations ($t=102$), the absolute value of the estimated line admittance is 2.21 for RLS1, 2.11 for RLS2, and 1.1 for DoE. Moreover, after 7 iterations, the estimation is lower than 1 for all the online algorithms.

\subsubsection{Grid T}
Results on grid T are aligned with the ones reported for grid D. 

The comparison with benchmarks (\cref{tab:estimation-error}) shows that, after 50 iterations, RLS1 achieves poorer performance than both OLS and Lasso. However, DoE outperforms the batch methods, proving the value of optimal voltage excitations. The difference with grid D may be explained by the higher share of generator in grid T, which enables an higher effectiveness of the design of experiment. The visual comparison between the actual and the estimated admittance matrix shows that all the elements are well estimated. In particular, it is worth noting that the maximum error is 2 orders of magnitudes lower than the smallest element in the admittance matrix. Therefore, inferring the topology of the network from the estimated admittance matrix is trivial. A similar analysis yields the same conclusions on grid D.

\begin{table}
	\centering
	\begin{tabular}{l r r r}
		\hline 
		{} & $m_\text{F} \; [\times 10^{-2}]$ & $m_{\max} \; [\times 10^{-2}]$ & $m_\text{R}$ \\
		\hline 
		OLS (batch) & 3.93 & 1.78 & 0.079\% \\
		Adaptive Lasso (batch) & 3.40 & 1.62 & 0.068\% \\
		RLS1 & 4.84 & 2.41 & 0.097\% \\
		DoE & 1.34 & 0.55 & 0.027\% \\
		\hline
	\end{tabular}
	\caption{Error metrics for grid T, scenario 1 after 50 samples.}
	\label{tab:estimation-error}
\end{table}

DoE achieves faster convergence than RLS1 in both scenarios 1 and 2, as well as better accuracy after 50 iterations - see \cref{fig:error-metrics-grid-t}. The stress posed on generators is comparable to grid D but, coherently with the other test case, voltage set-points and realized voltages never violate the limits, set to $[0.95, 1.05]$ p.u. for bus 2 and $[0.93, 1.07]$ p.u. for bus 3 (\cref{fig:gen-volt-grid-t}).

In the context of scenario 2, it is worth analyzing the error on the estimation of $y_{2,6}$, whose real value becomes zero at time $t=50$ as a consequence of the simulated fault (\cref{fig:trip-line-estimation-grid-t}). After 7 iterations, at $t=57$, the absolute value of the estimation with DoE is 0.48, while it is 2.55 with RLS1. Hence, DoE is again faster in updating the admittance matrix after localized changes.

\subsubsection{Sensitivity to Voltage Noise}
In real applications, measurement noise affects both currents and  voltages. Although a systematic discussion of this scenario is outside the scope of this section, we assess the deterioration in performance experienced by the proposed algorithms when a zero-mean Gaussian noise with covariance matrix $\sigma_v^2\mathbb{I}$ is applied to both the real and the imaginary part of voltage measurements. As displayed in \cref{fig:noise-sensitivity}, all methods suffer from input noise; however, DoE is less affected than other methods, and achieves an acceptable performance even when the noise on voltages is of the same order of magnitude as that on currents.

\subsubsection{Effect of the Design of Experiment Formulation}
As noted in \cref{sec:optimal}, the design-of-experiment formulation \eqref{eq:doe-problem} has to rely on estimated admittance matrix $\hat{Y}_{t-1}$, instead of the unknown real admittance matrix $Y$. In order to show the effect of such an approximation on the identification algorithm, we run DoE on scenario 1 by setting $\hat{Y}_{t-1} = Y$ in \eqref{eq:doe-pf-constraint}. The results in \cref{fig:doe-estimated-actual-y}, produced for grid D, show that the procedure based on the real model of the network performs better; but the difference is marginal. The analysis for grid T yields the same conclusions and it is therefore not reported. 

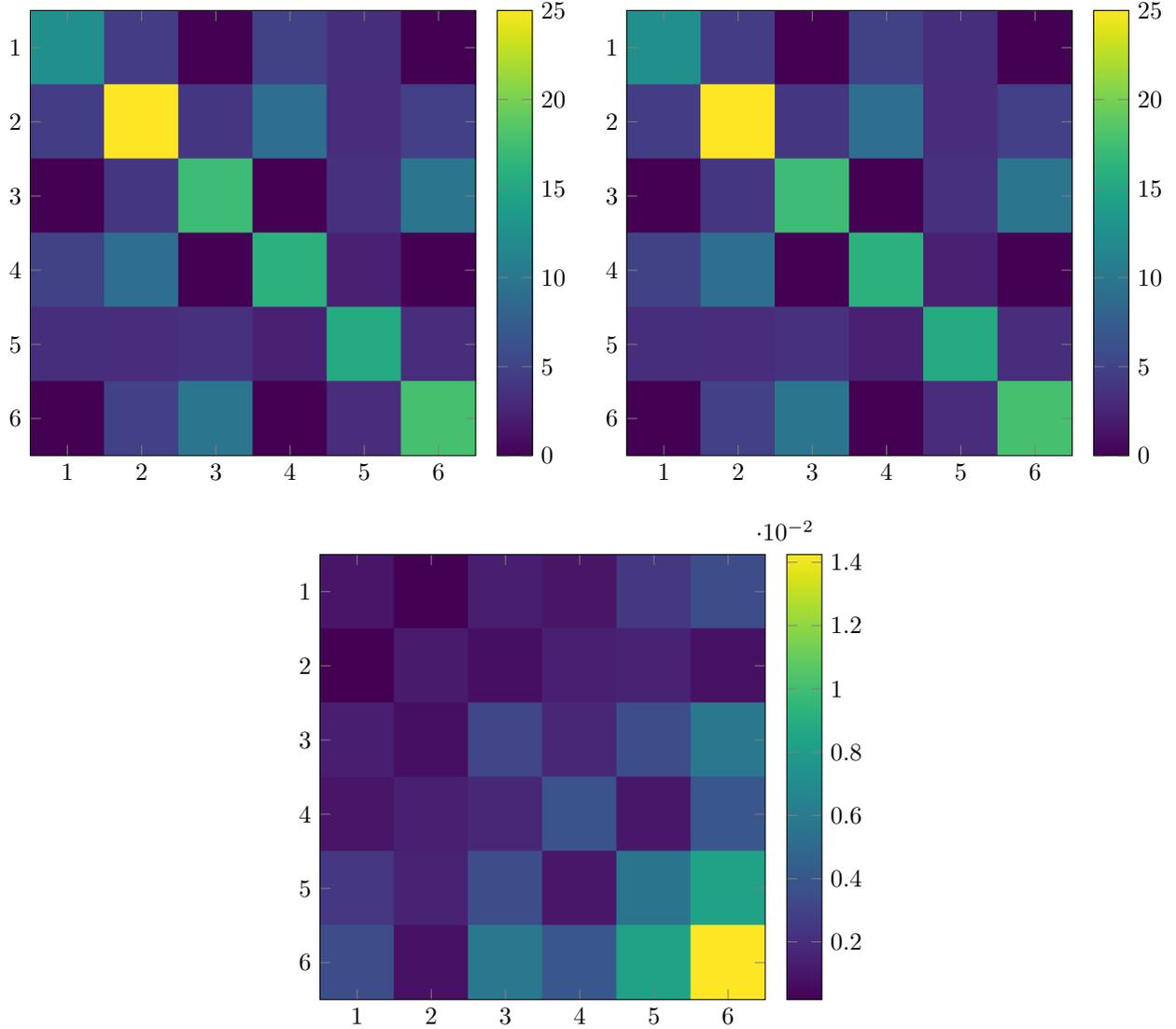
\begin{figure}
	\centering
	\begin{subfigure}[b]{.48\textwidth}
		\begin{tikzpicture}
			\begin{axis}[
				enlargelimits=true,
				height=\textwidth,
				width=\textwidth,
				colormap name=viridis,
				point meta min=0,
				point meta max=25,
				xtick=data,
				ytick=data,
				colorbar
				]
				\addplot[matrix plot,
				mesh/rows=6,
				point meta=explicit,
				] table[x=i, y=j, meta=y, col sep=comma] {"data/estimated_matrices.csv"};
			\end{axis}
		\end{tikzpicture}
	\end{subfigure}
	\hfill
	\begin{subfigure}[b]{.48\textwidth}
		\begin{tikzpicture}
			\begin{axis}[
				enlargelimits=true,
				height=\textwidth,
				width=\textwidth,
				colormap name=viridis,
				point meta min=0,
				point meta max=25,
				xtick=data,
				ytick=data,
				colorbar
				]
				\addplot[matrix plot,
				mesh/rows=6,
				point meta=explicit,
				] table[x=i, y=j, meta=y_doe, col sep=comma] {"data/estimated_matrices.csv"};
			\end{axis}
		\end{tikzpicture}
	\end{subfigure}
	\bigskip 
	\vfill
	\begin{subfigure}[b]{\textwidth}
		\centering
		\begin{tikzpicture}
			\begin{axis}[
				enlargelimits=true,
				height=0.48\textwidth,
				width=0.48\textwidth,
				colormap name=viridis,
				xtick=data,
				ytick=data,
				colorbar
				]
				\addplot[matrix plot,
				mesh/rows=6,
				point meta=explicit,
				] table[x=i, y=j, meta=err, col sep=comma] {"data/estimated_matrices.csv"};
			\end{axis}
		\end{tikzpicture}
	\end{subfigure}
	\caption{Absolute value of the actual and estimated admittance matrix, and of the estimation error after 50 time steps for grid T, scenario 1, using the DoE method. Top left panel: actual admittance matrix $|Y|$, top right panel: estimated admittance matrix $|\hat{Y}_{50}|$, bottom panel: estimation error $|Y - \hat{Y}_{50}|$ -- note the difference in scale with respect to the top panels. \label{fig:heatmap-grid-t}}
\end{figure}

\begin{figure}
	\centering
	\ref*{legend} \\
	\vspace{0.5cm}
	\begin{subfigure}[b]{.48\linewidth}
		\begin{tikzpicture}
		\begin{semilogyaxis}[
		ylabel=$m_\text{F}$,height=6.5cm,width=\linewidth,
		legend columns=-1,
		legend entries={RLS1, RLS2, DoE},
		legend to name=legend,
		ymajorgrids=true,
		xmajorgrids=true,
		grid style=dashed
		]
		\addplot[blue, mark=none] table [y=RLS1,x=SAMPLE, col sep=comma] {"data/fro_errors_without_fault_case_ieee13.csv"};
		\addplot[orange, mark=none] table [y=RLS2,x=SAMPLE, col sep=comma] {"data/fro_errors_without_fault_case_ieee13.csv"};
		\addplot[magenta, mark=none] table [y=DOE,x=SAMPLE, col sep=comma] {"data/fro_errors_without_fault_case_ieee13.csv"};
		\end{semilogyaxis}
		\end{tikzpicture}
		\caption{Frobenius norm of error in scenario 1 \label{fig:fro-err-grid-d-case-1}}
	\end{subfigure}
	\hfill
	\begin{subfigure}[b]{.48\linewidth}
		\begin{tikzpicture}
		\begin{semilogyaxis}[
		height=6.5cm,
		width=\linewidth,
		ymajorgrids=true,
		xmajorgrids=true,
		grid style=dashed
		]
		\addplot[blue, mark=none] table [y=RLS1,x=SAMPLE, col sep=comma] {"data/fro_errors_with_fault_case_ieee13.csv"};
		\addplot[orange, mark=none] table [y=RLS2,x=SAMPLE, col sep=comma] {"data/fro_errors_with_fault_case_ieee13.csv"};
		\addplot[magenta, mark=none] table [y=DOE,x=SAMPLE, col sep=comma] {"data/fro_errors_with_fault_case_ieee13.csv"};
		\end{semilogyaxis}
		\end{tikzpicture}
		\caption{Frobenius norm of error in scenario 2 \label{fig:fro-err-grid-d-case-2}}
	\end{subfigure}   
	\bigskip 
	\vfill
	\begin{subfigure}[b]{.48\linewidth}
		\begin{tikzpicture}
		\begin{semilogyaxis}[
		xlabel=Step,
		ylabel=$m_{\max}$,
		height=6.5cm,
		width=\linewidth,
		ymajorgrids=true,
		xmajorgrids=true,
		grid style=dashed
		]
		\addplot[blue, mark=none] table [y=RLS1,x=SAMPLE, col sep=comma] {"data/max_errors_without_fault_case_ieee13.csv"};
		\addplot[orange, mark=none] table [y=RLS2,x=SAMPLE, col sep=comma] {"data/max_errors_without_fault_case_ieee13.csv"};
		\addplot[magenta, mark=none] table [y=DOE,x=SAMPLE, col sep=comma] {"data/max_errors_without_fault_case_ieee13.csv"};
		\end{semilogyaxis}
		\end{tikzpicture}
		\caption{Max norm of error in scenario 1 \label{fig:max-err-grid-d-case-1}}
	\end{subfigure}
	\hfill
	\begin{subfigure}[b]{.48\linewidth}
		\begin{tikzpicture}
		\begin{semilogyaxis}[
		xlabel=Step,
		height=6.5cm,
		width=\linewidth,
		ymajorgrids=true,
		xmajorgrids=true,
		grid style=dashed
		]
		\addplot[blue, mark=none] table [y=RLS1,x=SAMPLE, col sep=comma] {"data/max_errors_with_fault_case_ieee13.csv"};
		\addplot[orange, mark=none] table [y=RLS2,x=SAMPLE, col sep=comma] {"data/max_errors_with_fault_case_ieee13.csv"};
		\addplot[magenta, mark=none] table [y=DOE,x=SAMPLE, col sep=comma] {"data/max_errors_with_fault_case_ieee13.csv"};
		\end{semilogyaxis}
		\end{tikzpicture}
		\caption{Max norm of error in scenario 2 \label{fig:max-err-grid-d-case-2}}
	\end{subfigure}
	\caption{Error metrics in grid D, scenarios 1 and 2 \label{fig:error-metrics-grid-d}}
\end{figure}

\begin{figure}
	\centering
	\begin{tikzpicture}
	\begin{axis}[
	xlabel=Step,
	ylabel=Bus voltage (p.u.),
	legend pos=north east,
	height=6.5cm,
	width=.9\linewidth,
	legend style={font=\small},
	ymajorgrids=true,
	grid style=dashed
	]
	\addplot[const plot, blue, mark=none] table [y=V1,x=SAMPLE, col sep=comma]{"data/doe_voltages_without_fault_case_ieee13.csv"};
	\addlegendentry{Bus 6}
	\addplot[const plot, magenta, mark=none, dashed] table [y=V2,x=SAMPLE, col sep=comma]{"data/doe_voltages_without_fault_case_ieee13.csv"};
	\addlegendentry{Bus 10}
	\end{axis}
	\end{tikzpicture}
	\caption{Generator voltages produced by DoE for the first 50 iterations in grid D, scenario 1. \label{fig:gen-volt-grid-d}}
\end{figure}
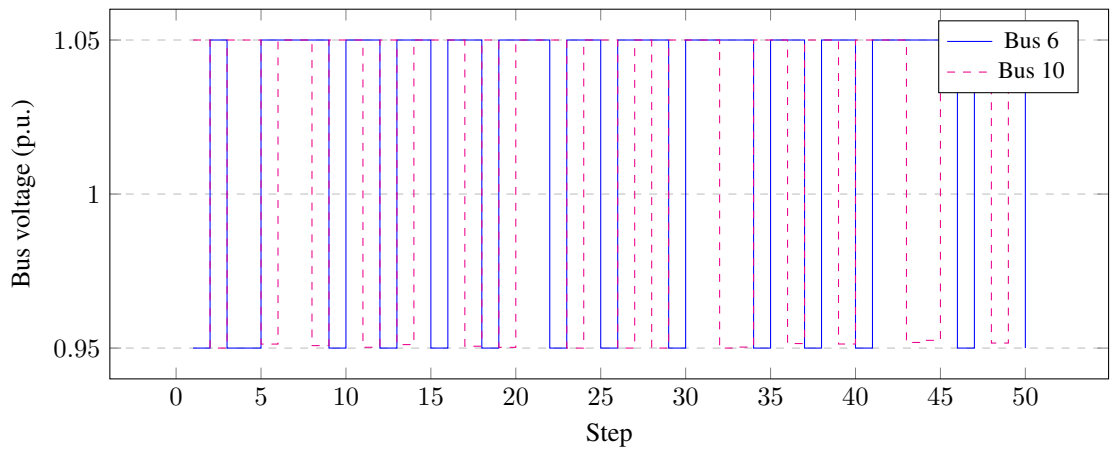

\begin{figure}
	\centering
	\begin{tikzpicture}
	\begin{axis}[
	xlabel=Step,
	ylabel=$\left|y_{7, 10}\right|$,
	legend pos=north east,
	height=6.5cm,
	width=.9\linewidth,
	legend style={font=\small}
	]
	\addplot[blue, mark=none] table [y=RLS1,x=SAMPLE, col sep=comma]{"data/line_trip_estimation_case_ieee13.csv"};
	\addlegendentry{RLS1}
	\addplot[orange, mark=none] table [y=RLS2,x=SAMPLE, col sep=comma]{"data/line_trip_estimation_case_ieee13.csv"};
	\addlegendentry{RLS2}
	\addplot[magenta, mark=none] table [y=DOE,x=SAMPLE, col sep=comma]{"data/line_trip_estimation_case_ieee13.csv"};
	\addlegendentry{DoE}
	\addplot[black, mark=none, dashed] table [y=REAL,x=SAMPLE, col sep=comma]{"data/line_trip_estimation_case_ieee13.csv"};
	\addlegendentry{Real}
	\end{axis}
	\end{tikzpicture}
	\caption{Estimation of element $y_{7, 10}$ in grid D, scenario 2.}
	\label{fig:trip-line-estimation-grid-d}
\end{figure}
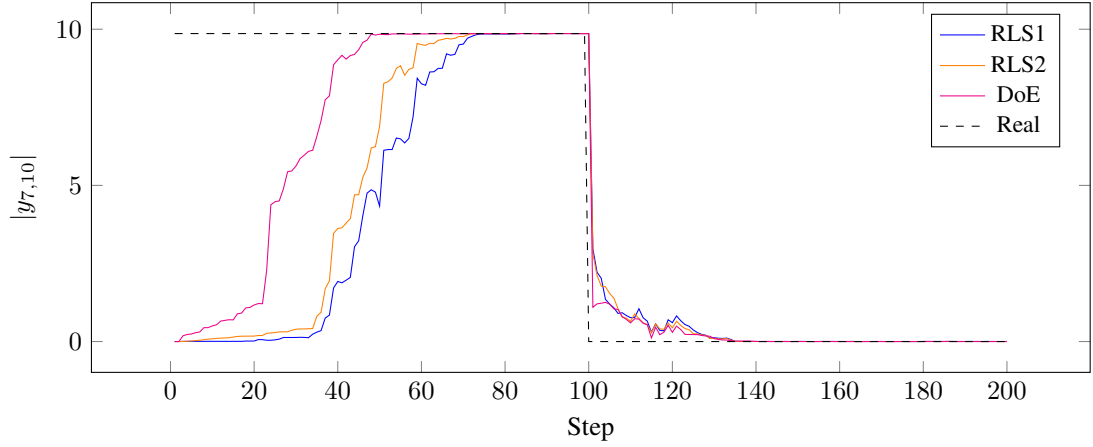

\begin{figure}
	\centering
	\ref*{legend2} \\
	\vspace{0.5cm}
	\begin{subfigure}[b]{.48\linewidth}
		\begin{tikzpicture}
		\begin{semilogyaxis}[
		ylabel=$m_\text{F}$,
		height=6.5cm,
		width=\linewidth,
		legend columns=-1,
		legend entries={RLS1, DoE},
		legend to name=legend2,
		ymajorgrids=true,
		xmajorgrids=true,
		grid style=dashed
		]
		\addplot[blue, mark=none] table [y=RLS1,x=SAMPLE, col sep=comma] {"data/fro_errors_without_fault_case6ww_with_conductance.csv"};
		\addplot[magenta, mark=none] table [y=DOE,x=SAMPLE, col sep=comma] {"data/fro_errors_without_fault_case6ww_with_conductance.csv"};
		\end{semilogyaxis}
		\end{tikzpicture}
		\caption{Frobenius norm of error in scenario 1 \label{fig:fro-err-grid-t-case-1}}
	\end{subfigure}
	\hfill
	\begin{subfigure}[b]{.48\linewidth}
		\begin{tikzpicture}
		\begin{semilogyaxis}[
		height=6.5cm,
		width=\linewidth,
		ymajorgrids=true,
		xmajorgrids=true,
		grid style=dashed
		]
		\addplot[blue, mark=none] table [y=RLS1,x=SAMPLE, col sep=comma] {"data/fro_errors_with_fault_case6ww_with_conductance.csv"};
		\addplot[magenta, mark=none] table [y=DOE,x=SAMPLE, col sep=comma] {"data/fro_errors_with_fault_case6ww_with_conductance.csv"};
		\end{semilogyaxis}
		\end{tikzpicture}
		\caption{Frobenius norm of error in scenario 2 \label{fig:fro-err-grid-t-case-2}}
	\end{subfigure}   
	\bigskip 
	\vfill
	\begin{subfigure}[b]{.48\linewidth}
		\begin{tikzpicture}
		\begin{semilogyaxis}[
		xlabel=Step,
		ylabel=$m_{\max}$,
		height=6.5cm,
		width=\linewidth,
		ymajorgrids=true,
		xmajorgrids=true,
		grid style=dashed
		]
		\addplot[blue, mark=none] table [y=RLS1,x=SAMPLE, col sep=comma] {"data/max_errors_without_fault_case6ww_with_conductance.csv"};
		\addplot[magenta, mark=none] table [y=DOE,x=SAMPLE, col sep=comma] {"data/max_errors_without_fault_case6ww_with_conductance.csv"};
		\end{semilogyaxis}
		\end{tikzpicture}
		\caption{Max norm of error in scenario 1 \label{fig:max-err-grid-t-case-1}}
	\end{subfigure}
	\hfill
	\begin{subfigure}[b]{.48\linewidth}
		\begin{tikzpicture}
		\begin{semilogyaxis}[
		xlabel=Step,
		height=6.5cm,
		width=\linewidth,
		ymajorgrids=true,
		xmajorgrids=true,
		grid style=dashed
		]
		\addplot[blue, mark=none] table [y=RLS1,x=SAMPLE, col sep=comma] {"data/max_errors_with_fault_case6ww_with_conductance.csv"};
		\addplot[magenta, mark=none] table [y=DOE,x=SAMPLE, col sep=comma] {"data/max_errors_with_fault_case6ww_with_conductance.csv"};
		\end{semilogyaxis}
		\end{tikzpicture}
		\caption{Max norm of error in scenario 2 \label{fig:max-err-grid-t-case-2}}
	\end{subfigure}
	\caption{Error metrics in grid T, scenarios 1 and 2 \label{fig:error-metrics-grid-t}}
\end{figure}

\begin{figure}
	\centering
	\begin{tikzpicture}
	\begin{axis}[
		xlabel=Step,
		ylabel=Bus voltage (p.u.),
		legend pos=north east,
		height=6.5cm,width=\linewidth,
		legend style={font=\small},
		ymajorgrids=true,
		grid style=dashed
	]
	\addplot[const plot, blue, mark=none] table [y=V2,x=SAMPLE, col sep=comma]{"data/doe_voltages_without_fault_case6ww_with_conductance.csv"};
	\addlegendentry{Bus 2}
	\addplot[const plot, magenta, mark=none, dashed] table [y=V3,x=SAMPLE, col sep=comma]{"data/doe_voltages_without_fault_case6ww_with_conductance.csv"};
	\addlegendentry{Bus 3}
	\end{axis}
	\end{tikzpicture}
	\caption{Generator voltages produced by DoE in grid T, scenario 1. \label{fig:gen-volt-grid-t}}
\end{figure}
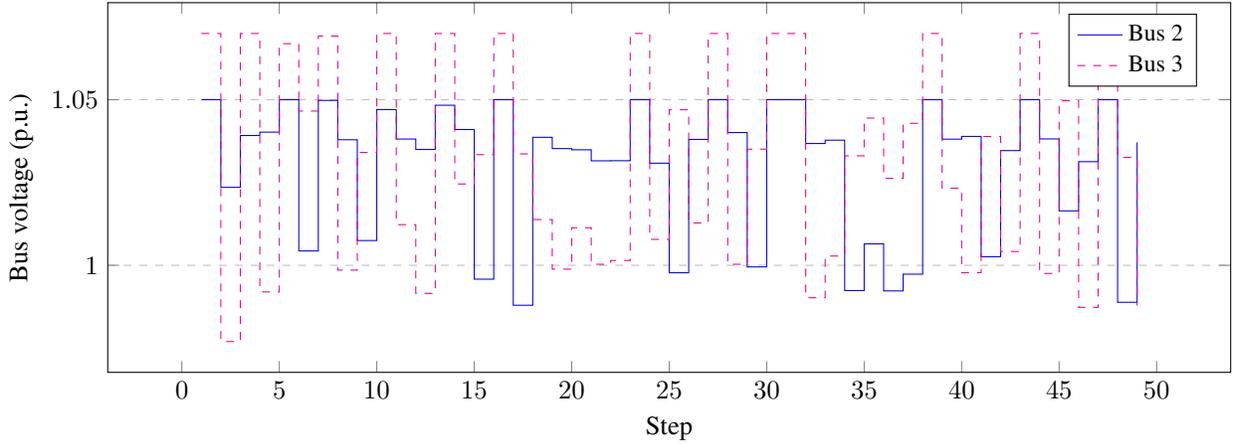

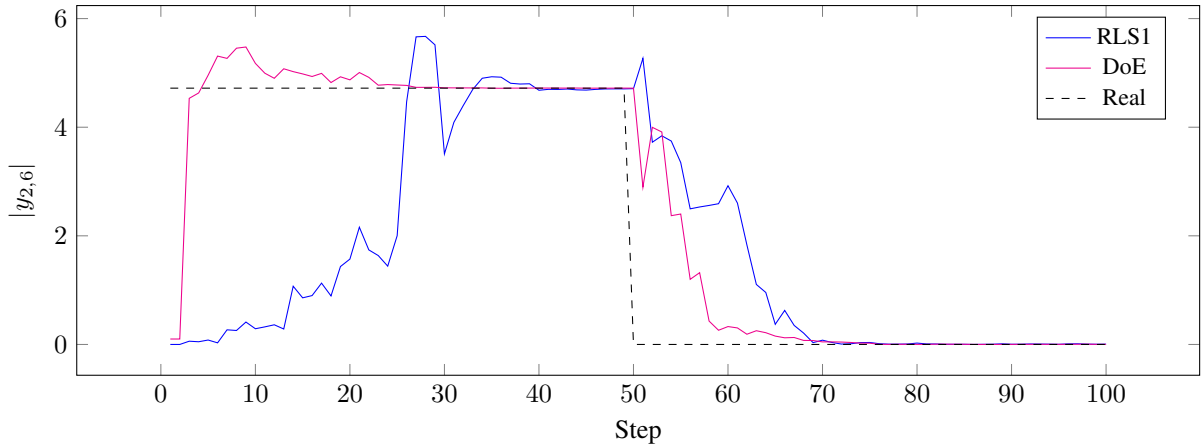
\begin{figure}
	\centering
	\begin{tikzpicture}
	\begin{axis}[
		xlabel=Step,
		ylabel=$\left|y_{2,6}\right|$,
		legend pos=north east,
		height=6.5cm,
		width=\linewidth,
		legend style={font=\small}
	]
	\addplot[blue, mark=none] table [y=RLS1,x=SAMPLE, col sep=comma] {"data/line_trip_estimation_case6ww_with_conductance.csv"};
	\addlegendentry{RLS1}
	\addplot[magenta, mark=none] table [y=DOE,x=SAMPLE, col sep=comma] {"data/line_trip_estimation_case6ww_with_conductance.csv"};
	\addlegendentry{DoE}
	\addplot[black, mark=none, dashed] table [y=REAL,x=SAMPLE, col sep=comma]{"data/line_trip_estimation_case6ww_with_conductance.csv"};
	\addlegendentry{Real}
	\end{axis}
	\end{tikzpicture}
	\caption{Estimation of element $y_{2,6}$ in grid T, scenario 2. \label{fig:trip-line-estimation-grid-t}}
\end{figure}

\begin{figure}
	\centering
	\begin{subfigure}[b]{.48\linewidth}
		\begin{tikzpicture}
		\begin{loglogaxis}[
			xlabel=$\sigma_v$,
			ylabel=$m_\text{F}$,
			legend pos=north west,
			height=6.5cm,
			width=\linewidth,
			legend style={font=\small}
		]
		\addplot[blue, mark=x, dashed, mark options={solid}] table [y=RLS1,x=SD_V, col sep=comma] {"data/noise_sensitivity_fro_errors_case_ieee13.csv"};
		\addlegendentry{RLS1}
		\addplot[orange, mark=+, dashed, mark options={solid}] table [y=RLS2,x=SD_V, col sep=comma]{"data/noise_sensitivity_fro_errors_case_ieee13.csv"};
		\addlegendentry{RLS2}
		\addplot[magenta, mark=star, dashed, mark options={solid}] table [y=DOE,x=SD_V, col sep=comma]{"data/noise_sensitivity_fro_errors_case_ieee13.csv"};
		\addlegendentry{DoE}
		\end{loglogaxis}
		\end{tikzpicture}
	\end{subfigure}
	\hfill
	\begin{subfigure}[b]{.48\linewidth}
		\begin{tikzpicture}
		\begin{loglogaxis}[
			xlabel=$\sigma_v$,
			legend pos=north west,
			height=6.5cm,
			width=\linewidth,
			legend style={font=\small}
		]
		\addplot[blue, mark=x, dashed, mark options={solid}] table [y=RLS1,x=SD_V, col sep=comma] {"data/noise_sensitivity_fro_errors_case6ww.csv"};
		\addlegendentry{RLS1}
		\addplot[magenta, mark=star, dashed, mark options={solid}] table [y=DOE,x=SD_V, col sep=comma]{"data/noise_sensitivity_fro_errors_case6ww.csv"};
		\addlegendentry{DoE}
		\end{loglogaxis}
		\end{tikzpicture}
	\end{subfigure}
	\caption{Frobenius norm of estimation error for different levels of noise on voltage measurements in scenario 1. Left panel: grid D, right panel: grid T. \label{fig:noise-sensitivity}}
\end{figure}
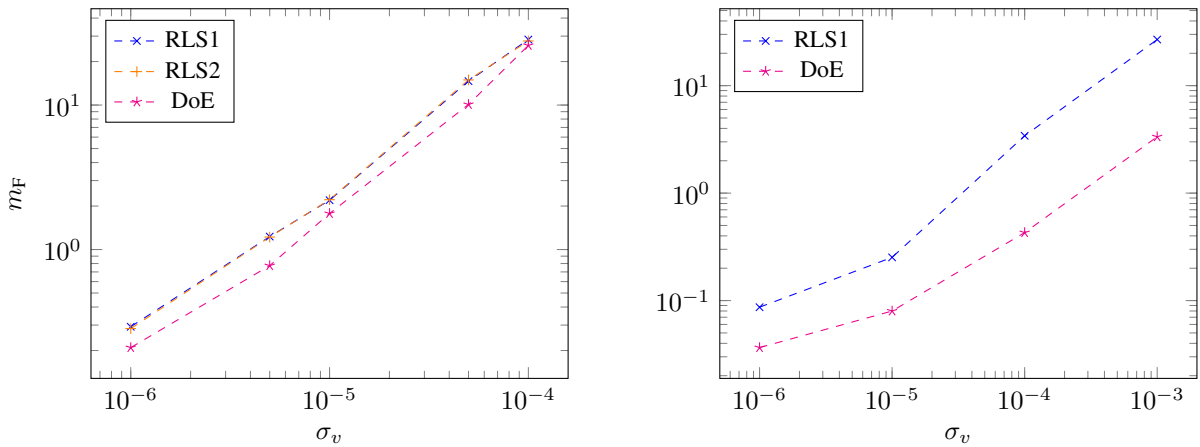

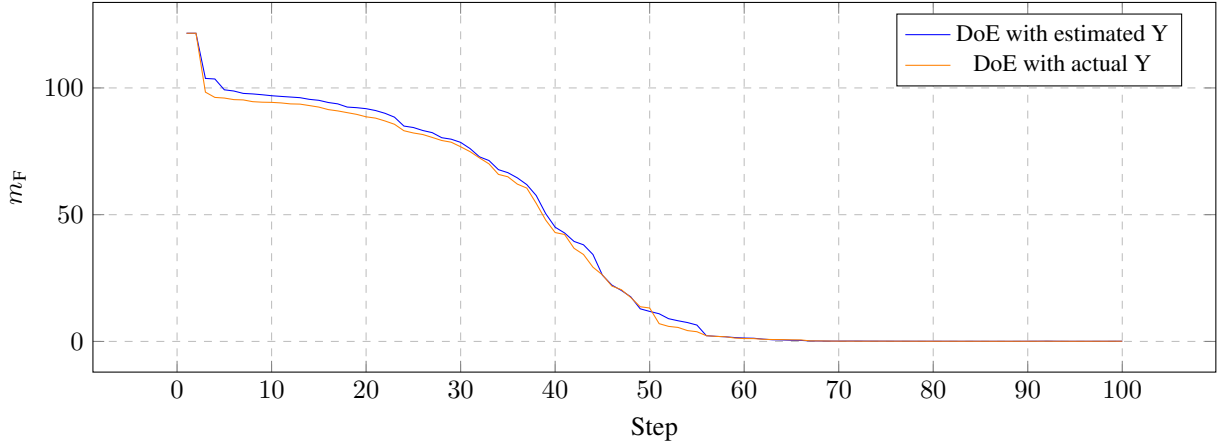
\begin{figure}
	\centering
	\begin{tikzpicture}
	\begin{axis}[
		xlabel=Step,ylabel=$m_\text{F}$,
		legend pos=north east,
		height=6.5cm,
		width=\linewidth,
		legend style={font=\small},
		ymajorgrids=true,
		xmajorgrids=true,
		grid style=dashed
	]
	\addplot[blue, mark=none] table [y=DOE,x=SAMPLE, col sep=comma]{"data/fro_errors_without_fault_case_ieee13_real_Y.csv"};
	\addlegendentry{DoE with estimated Y}
	\addplot[orange, mark=none] table [y=DOE_REAL_Y,x=SAMPLE, col sep=comma]{"data/fro_errors_without_fault_case_ieee13_real_Y.csv"};
	\addlegendentry{DoE with actual Y}
	\end{axis}
	\end{tikzpicture}
	\caption{Comparison between DoE with estimated and actual admittance matrix $Y$. \label{fig:doe-estimated-actual-y}}
\end{figure}

\section{Conclusions}
\label{sec:conclusions}
A frequent lack of detailed information such as grid topology and line parameters motivated the development of this work, which presents an online learning procedure for grid identification in AC networks. In contrast with batch methods for estimating the grid admittance matrix, our algorithm is online and recursive in nature, thus capable of adapting to both sudden changes in the network topology and slow drift in line parameters. 

Notwithstanding the applicability of our methods to generic admittance matrices, we provide a transformation matrix that leverages the structural properties of symmetric Laplacian matrices. Furthermore, we propose a method based on optimal DoE for improving convergence of recursive identification algorithms. 

Future developments will aim at coming up with novel identification techniques for networks where not all nodal electric variables can be measured. Effort will also be devoted to extending out identification framework to error-in-variable models, with a view to properly taking into account all sources of measurement error \cite{brouillon2021bayesian}. Further work will also explore the utility of grid identification schemes in the supervisory control of microgrids \cite{nahata2020hierarchical}.

\appendix
\section{Existence and Uniqueness of the Transformation Matrix $T$}
\label{app:existence-of-t}
We show the existence and uniqueness of the transformation matrix $T$, claimed in Lemma \ref{lem:duplication} introduced in \cref{sec:recursive} and defined as follows.

\begin{defn}
	Given a Laplacian matrix $A \in \mathbb{C}^{n \times n}$, the transformation matrix $T$ is such that 
	\begin{equation}
		\vech(A) = T \ve(A).
	\end{equation}
	\label{def:transformation-matrix}
\end{defn}
\begin{proof}
	Each element of $\vech(A)$ is a linear combination of elements in $\ve(A)$ and this is sufficient to guarantee the existence of a linear map transforming $\ve(A)$ into $\vech(A)$. The uniqueness can be shown by contradiction. Assume there exists $\Tilde{T} \neq T$ such that $\vech(A)=\Tilde{T}\ve(A)=T\ve(A)$. Then, $\mathbf{0}_{n(n+1)/2}=(\Tilde{T} - T)\ve(A), \forall A$. Thus, it has to be $\Tilde{T} = T$. 
\end{proof}

\section{Derivation of the Transformation Matrix $T$}
\label{app:derivation-of-t}
The construction of $T$ is best understood starting with an example. Let $n = 4$ and $A \in \mathbb{C}^{4 \times 4}$ be the following Laplacian matrix:
\begin{equation}
A = \begin{bmatrix}
a_1 + a_2 + a_3 & -a_1 & -a_2 & -a_3\\
-a_1 & a_1 + a_4 + a_5 & -a_4 & -a_5\\
-a_2 & -a_4 & a_2 + a_4 + a_5 & -a_6\\
-a_3 & -a_5 & -a_6 & a_3 + a_5 + a_6\\
\end{bmatrix}
\end{equation} 
By definition, the half-vectorization $\vech(A)$ and the non-redundant vectorization $\ve(A)$ are:
\begin{equation}
\vech(A) = \begin{bmatrix}
a_1 + a_2 + a_3 \\
-a_1 \\
-a_2 \\
-a_3 \\
-a_1 + a_4 + a_5 \\
-a_4 \\ 
-a_5\\
a_2 + a_4 + a_5 \\
-a_6\\
a_3 + a_5 + a_6\\
\end{bmatrix}, \quad
\ve(A) = \begin{bmatrix}
a_1 \\
a_2 \\
a_3 \\
a_4 \\ 
a_5\\
a_6\\
\end{bmatrix}
\end{equation}
From the implicit Definition \ref{def:transformation-matrix}, it is immediate to check that the transformation matrix is:
\begin{equation}
T = \begin{bmatrix}
1 & 1 & 1 & 0 & 0 & 0 \\
-1 & 0 & 0 & 0 & 0 & 0 \\
0 & -1 & 0 & 0 & 0 & 0 \\
0 & 0 & -1 & 0 & 0 & 0 \\
1 & 0 & 0 & 1 & 1 & 0 \\
0 & 0 & 0 & -1 & 0 & 0 \\
0 & 0 & 0 & 0 & -1 & 0 \\
0 & 1 & 0 & 1 & 0 & 1 \\
0 & 0 & 0 & 0 & 0 & -1 \\
0 & 0 & 1 & 0 & 1 & 1 \\
\end{bmatrix}
\end{equation} 
In order to develop a construction procedure for $T$, it is convenient to divide it into $n$ submatrices of different dimensions $T_z$, $z = 1...n$, with $T_z \in \mathbb{R}^{n+1-z \times n(n-1)/2}$ such that:
\begin{equation}
T = \begin{bmatrix}
T_1 \\
\vdots \\
T_z \\ 
\vdots \\
T_n\\
\end{bmatrix}
\end{equation}
Applying the split to the $T$ matrix in the example, we get:
\begin{equation}
T = 
\begin{bmatrix}
T_1 \\
T_2 \\
T_3 \\ 
T_4 \\
\end{bmatrix}
=
\left[\begin{array}{c c c c c c}
1 & 1 & 1 & 0 & 0 & 0 \\
-1 & 0 & 0 & 0 & 0 & 0 \\
0 & -1 & 0 & 0 & 0 & 0 \\
0 & 0 & -1 & 0 & 0 & 0 \\
\hdashline
1 & 0 & 0 & 1 & 1 & 0 \\
0 & 0 & 0 & -1 & 0 & 0 \\
0 & 0 & 0 & 0 & -1 & 0 \\
\hdashline
0 & 1 & 0 & 1 & 0 & 1 \\
0 & 0 & 0 & 0 & 0 & -1 \\
\hdashline
0 & 0 & 1 & 0 & 1 & 1 \\
\end{array}\right]
\end{equation}
Each $T_z$ has a similar structure: when multiplied by $\ve(A)$, the first row yields a diagonal element of $A$, while the other $n-z$ rows adjust the signs of the off-diagonal elements. \\
We thus focus on a generic $T_z$. Its structure can be further divided into four submatrices: the first row is denoted $T_{za}$, while the reminder of $T_i$ can be split into two zero matrices $T_{zb}$ and $T_{zd}$ and a negative identity matrix $T_{zc}$. The sizes of $T_{zb}$, $T_{zc}$, and $T_{zd}$ change with the submatrix index $z$.
We show the split with $T_2$ in the example:
\begin{equation}
T_2 = \left[\begin{array}{c ;{2pt/2pt} c ;{2pt/2pt} c}
\multicolumn{3}{c}{T_{2a}} \\
\hdashline
T_{2b} & T_{2c} & T_{2d}\\
\end{array}\right]
= \left[\begin{array}{c c c ;{2pt/2pt} c c ;{2pt/2pt} c}
1 & 0 & 0 & 1 & 1 & 0 \\
\hdashline
0 & 0 & 0 & -1 & 0 & 0 \\
0 & 0 & 0 & 0 & -1 & 0 \\
\end{array}\right]
\end{equation}
Due to the structure of $\vech(A)$ and $\ve(A)$, one has:
\begin{align}
T_{ib} &= \mathbb{O}_{n-z \times n(n-z) - z(z-1)/2} \\
T_{ic} &= - \mathbb{I}_{n-z} \\
T_{id} &= \mathbb{O}_{n-z \times (n(n-1) + z(z-1))/2 - nz + z}
\end{align}
To justify the expressions, one can observe that every block $T_z$ maps $\ve(A)$ into $n+1-z$ elements of $\vech(A)$, the first being a diagonal element of $A$. The negative identity matrix $T_{zc}$ has size equal to the number of mapped elements of $\vect(A)$ which are not diagonal element of $A$, namely $n+1-z-1=n-z$, while the zero matrix $T_{zb}$ has a number of columns equal to the number of off-diagonal elements of $A$ mapped by $T_w$ with $w<z$, namely $\sum_{k=1}^{z-1}n-k= n(n-z) - z(z-1)/2$. The structure of $T_{zd}$ follows from the size of $T$ and the previous considerations.

The structure of the first row is more complex and reads:
\begin{equation}
T_{za} = \sum_{k=1}^{n-z}\boldsymbol{e}^\top _{k + n(z-1) - z(z-1)/2} + \sum_{k=1}^{z-1}\boldsymbol{e}^\top _{z - 1 + (n-1)(k-1) + k(k-1)/2}
\label{eq:T_i_a}
\end{equation}
where $e \in \mathbb{R}^{n(n-1)/2}$.
The matrices $T_{za}$ map the elements of $\ve(A)$ into the diagonal elements of $A$. By the properties of the Laplacian matrix $A$, its diagonal elements can be expressed as:
\begin{equation}
a_{ii} = - \sum_{j=1, j \neq i}^n a_{ij} = - \left( \sum_{j=1}^{i-1} a_{ij} + \sum_{j=i+1}^{n} a_{ij} \right) = - \left( \sum_{k=1}^{i-1} a_{ki} + \sum_{j=i+1}^{n} a_{ij} \right)
\label{eq:diagonal-elements}
\end{equation}
The first sum in \eqref{eq:T_i_a} accounts for the terms in the second sum in \eqref{eq:diagonal-elements} while the second sum in \eqref{eq:T_i_a} identifies the terms in the first sum of \eqref{eq:diagonal-elements}.

It is worth noting that the construction method described for $T$ is general and holds irrespective of the dimension of $A$. Python and MATLAB implementation of the construction formulae are publicly available on Github \cite{fabbiani2020gists}.

\section{Proof of Lemma 2}
\label{app:proof-of-rls-convergence}
We show here that the results about the convergence of recursive least squares claimed in the work by Bittanti and Bolzern \cite{bittanti1990recursive} also hold in the context presented in \cref{sec:recursive}, featuring complex inputs, outputs, and multivariate output at each iteration.

\begin{proof}
In order to show (i), we start by substituting $\Tilde{\boldsymbol{x}}_t = \hat{\boldsymbol{x}}_t - \boldsymbol{x}$ in \eqref{eq:rls} and considering that, in the noise-free case, $\boldsymbol{i}_t = A_t \boldsymbol{x}$. Then, we get the recursive formula
\begin{equation}
\label{eq:x-tilde-def}
\Tilde{\boldsymbol{x}}_t = \hat{\boldsymbol{x}}_{t-1} - Z_t A_t^\herm  A_t \Tilde{\boldsymbol{x}}_{t-1}.
\end{equation}
For convenience, we define $\boldsymbol{\epsilon}_t \coloneqq A_t \Tilde{\boldsymbol{x}}_{t-1} = A_t \hat{\boldsymbol{x}}_{t-1} - \boldsymbol{i}_t$. Next, we introduce the Lyapunov-like function $W_t \coloneqq \Tilde{x}^\herm  Z_t^{-1}\Tilde{x}_t$. Note that $W$ is a real-valued function, as $Z_t$ is Hermitian and so is $Z_t^{-1}$. By combining the definition of $W_t$ with equations \eqref{eq:x-tilde-def} and \eqref{eq:rls_z}, we derive  
\begin{equation}
\label{eq:vt}
W_t = \lambda W_{t-1} - \boldsymbol{\epsilon}_t^\herm (\mathbb{I}_n - A_tZ_tA_t^\herm )\boldsymbol{\epsilon}_t.
\end{equation}
It can be shown  that $\mathbb{I}_n - A_tZ_tA_t^\herm  \succeq 0$; see e.g. Lemma 1 in \cite{liu2013convergence}. Consider now the quantity $\boldsymbol{\epsilon}_t^\herm (\mathbb{I}_n - A_tZ_tA_t^\herm )\boldsymbol{\epsilon}_t$: it is real and non-negative because $\mathbb{I}_n - A_tZ_tA_t^\herm $ is Hermitian and positive semidefinite. From \eqref{eq:vt}, we obtain the inequality
\begin{equation}
W_t \leq \lambda W_{t-1},
\end{equation}
which one can recursively apply at each $t$ to obtain
\begin{equation}
W_t \leq \lambda^t W_0.
\end{equation}
Recalling the definition of $W$ and \eqref{eq:rls_z}, we write:
\begin{align}
\lambda^t W_0 \geq W_t &= \Tilde{\boldsymbol{x}}_t^\herm  Z_t^{-1}\Tilde{\boldsymbol{x}}_t \\ \nonumber
&= \Tilde{\boldsymbol{x}}_t^\herm \left(\lambda Z_{t-1}^{-1} + A_t^\herm A_t\right)\Tilde{\boldsymbol{x}}_t \\ \nonumber
&= \Tilde{\boldsymbol{x}}_t^\herm \left(\lambda^t Z_{0}^{-1} + \sum_{i=1}^{t} \lambda^{t-i} A_i^\herm A_i\right)\Tilde{\boldsymbol{x}}_t \\ \nonumber
&\geq \lambda^t\Tilde{\boldsymbol{x}}_t^\herm \left(Z_{0}^{-1} + \sum_{i=1}^{t} A_i^\herm A_i\right)\Tilde{\boldsymbol{x}}_t.
\end{align}
Therefore, we conclude that 
\begin{equation}
	\Tilde{\boldsymbol{x}}_t^\herm (Z_{0}^{-1} + \sum_{i=1}^{t} A_i^\herm A_i)\Tilde{\boldsymbol{x}}_t \leq W_0.
\end{equation}
As both $Z_0^{-1} \succ 0$ and $\sum_{i=1}^{t} A_i^\herm A_i \succeq 0$, we have
\begin{equation}
\label{eq:x-tilde-bound}
\Tilde{\boldsymbol{x}}_t^\herm Z_{0}^{-1}\Tilde{\boldsymbol{x}}_t \leq W_0 
\end{equation}
and 
\begin{equation}
\Tilde{\boldsymbol{x}}_t^\herm \left(\sum_{i=1}^{t} A_i^\herm A_i\right)\Tilde{\boldsymbol{x}}_t \leq W_0. \label{eq:x-tilde-bound-with-a}
\end{equation}
Since $\Tilde{\boldsymbol{x}}_t^\herm Z_{0}^{-1}\Tilde{\boldsymbol{x}}_t \geq \text{mineig}(Z_{0}^{-1}) \norm{\Tilde{\boldsymbol{x}}_t}^2$,  equation \eqref{eq:x-tilde-bound} yields
\begin{equation}
	\text{mineig}(Z_0^{-1}) \norm{\Tilde{\boldsymbol{x}}_t}^2 \leq W_0,
\end{equation}
where $\text{mineig}(X)$ is the minimal (real) eigenvalue of a Hermitian matrix $X$. Therefore, $\norm{\Tilde{\boldsymbol{x}}_t}$ is bounded.

In order to show (ii), let $G_t$ be the square root of $\sum_{i=1}^{t} A_i^\herm A_i$, and $\Tilde{\boldsymbol{x}}_t^{(e)}$ and $\Tilde{\boldsymbol{x}}_t^{(u)}$ the projections of $\Tilde{\boldsymbol{x}}_t$ onto the subspaces where persistent excitation holds and does not hold, respectively. Then, \eqref{eq:x-tilde-bound-with-a} can be written as
\begin{equation}
	\norm{G_t\Tilde{\boldsymbol{x}}_t^{(e)} + G_t\Tilde{\boldsymbol{x}}_t^{(u)}} \leq W_0^{1/2}.
\end{equation}
By the reverse triangular inequality, we have:
\begin{equation}
	\label{eq:u-e-projections}
	\underbrace{\norm{\frac{G_t\Tilde{\boldsymbol{x}}_t^{(e)}}{\norm{\Tilde{\boldsymbol{x}}_t^{(e)}}}\norm{\Tilde{\boldsymbol{x}}_t^{(e)}}}}_a - \underbrace{\norm{\frac{G_t\Tilde{\boldsymbol{x}}_t^{(u)}}{\norm{\Tilde{\boldsymbol{x}}_t^{(u)}}}\norm{\Tilde{\boldsymbol{x}}_t^{(u)}}}}_b \leq W_0^{1/2}.
\end{equation}
We can now apply to \eqref{eq:u-e-projections} the same argument proposed in \cite[Proof of Theorem 1]{bittanti1990recursive}. Due to part (i) of the proof, the norms of $\Tilde{\boldsymbol{x}}_t^{(e)}$ and $\Tilde{\boldsymbol{x}}_t^{(u)}$ are bounded. Moreover, as $\Tilde{\boldsymbol{x}}_t^{(u)}$ is the projection of $\Tilde{\boldsymbol{x}}_t$ onto the subspace where persistent excitation does not hold, the term $(b)$ of \eqref{eq:u-e-projections} is bounded. For the inequality \eqref{eq:u-e-projections} to hold, $(a)$ must also be bounded. Yet, the lemma in the appendix of \cite{bittanti1990recursive} shows that $G_t\Tilde{\boldsymbol{x}}_t^{(e)}/\norm{\Tilde{\boldsymbol{x}}_t^{(e)}}$ is unbounded. Therefore, $\Tilde{\boldsymbol{x}}_t^{(e)}$ must converge to zero for \eqref{eq:u-e-projections} to be verified, proving part (ii). It is worth noting that the lemma in \cite{bittanti1990recursive} involves sequences of real positive semidefinite matrices, but the proof holds without modification for complex hermitian positive semidefinite matrices.
\end{proof}

\bibliographystyle{unsrt}
\bibliography{references}

\end{document}